\documentclass[11pt,a4paper]{article}
\usepackage[margin=2.6cm]{geometry}
\usepackage{amsmath,amssymb,amsthm}
\usepackage{booktabs}
\usepackage{array}
\usepackage{graphicx}
\usepackage{xcolor}
\usepackage[colorlinks=true,linkcolor=blue!60!black,citecolor=blue!60!black,urlcolor=blue!60!black]{hyperref}
\usepackage{microtype}
\hypersetup{
  pdftitle={The CKM sector of the exceptional-Jordan programme: finite-Dirac mass moduli, two conditional angle estimates, and the weak-to-mass bridge},
  pdfauthor={Tejinder P. Singh}
}

\newtheorem{theorem}{Theorem}
\newtheorem{lemma}{Lemma}
\newtheorem{proposition}{Proposition}
\theoremstyle{definition}
\newtheorem{postulate}{Postulate}
\theoremstyle{remark}

\newcommand{\Sym}{\mathrm{Sym}}
\newcommand{\Dtag}{[\textsf{D}]}
\newcommand{\Ptag}{[\textsf{P}]}
\newcommand{\Otag}{[\textsf{O}]}
\newcommand{\Vus}{|V_{us}|}
\newcommand{\Vcb}{|V_{cb}|}
\newcommand{\Vub}{|V_{ub}|}
\newcommand{\Vtd}{|V_{td}|}
\newcommand{\Vts}{|V_{ts}|}

\title{\bf The CKM sector of the exceptional-Jordan programme:\\ finite-Dirac mass moduli, two conditional angle estimates,\\ and the weak-to-mass bridge}
\author{Tejinder P. Singh\\[4pt]
\normalsize Tata Institute of Fundamental Research, Homi Bhabha Road, Mumbai 400005, India\\
\normalsize \texttt{tpsingh@tifr.res.in}}
\date{August 3, 2026}

\begin{document}
\maketitle

\begin{abstract}
\noindent
The standard model does not determine quark masses or the CKM matrix.  In the exceptional-Jordan programme, charged-fermion square-root masses occupy short $\Sym^3(\mathbf3)$ chains.  Compressing the multiplicative symmetric-cube lift onto the occupied nodes gives a root-mass operator $\mathcal R_f$ whose positive square has exactly the proposed mass-ratio spectrum.  This packages the proposed mass spectrum and the relative left frames in one finite Dirac operator; it does not derive the latter from singular-value decomposition alone.  Conditional on the occupied-node transport, virtual-node amplitude, real $(2,3)$ blocks and balanced-quadrature choices, the no-fit layer gives $\Vus=0.2371$ ($5.3\%$ high) and $\Vcb=0.0422$ ($0.8\%$ high) against the 2026 Particle Data Group global fit; the minimal identity $\Vub/\Vcb=\sqrt{m_u/m_c}$ is a factor two low, and one complex long edge is fitted to the remaining CKM angle and phase.  The scale-dependent parent mass-ratio inputs are interpreted at the common matching baseline $\mu_*=M_Z$.

For an adopted $F_1$--$F_2$ family embedding, we construct an exact Peirce-changing lift of the local Cabibbo bridge.  For an Albert frame $p_i$ and Peirce fibres $F_i(\mathbb O_{\mathbb C})$, ${\cal D}_{12}(z)=2[L_{p_1-p_2},L_{F_3(z)}]\in\mathfrak f_4(\mathbb C)$, with $z=-g_\chi$, gives the conjugate up/anti-down transport and preserves $\varphi_{12}=-2\chi$.  In the split real form, the refinement $J_2(\mathbb O_s)=J_2(\mathbb H_s)\oplus\mathbb H_s\ell$ decomposes the Peirce ten-vector as $\mathbf{10}=\mathbf6_{(3,3)}\oplus\mathbf4_{(2,2)}$.  The Albert cubic couples the four-component complement to opposite $\operatorname{Spin}(3,3)$ Weyl modules.  It therefore supplies a candidate universal chiral-soldering channel, conditional on the real-form embedding and localization.  If the finite bridge factorizes as $Y_f=B\otimes\Xi_f$ with a common nonzero $B$, this universal factor cancels from the family eigenvectors and $V_{\rm CKM}=U_{\Xi_u}^\dagger U_{\Xi_d}$.  Thus neither $B$ nor the corner scalar fixes mixing; the family kernel $\Xi_f$ remains dynamical.

The two balanced-rotor orientations give two discrete fitted long-edge branches; the commonly displayed branch is not dynamically selected.  For the adopted cyclic Majorana placement and a real-linear projected bridge, its completion has precisely the real support $(e_4,e_3,e_6)$ while the quark-phase $e_1$ quadrature drops out, making $J_\ell=0$ and $\delta^\ell_{CP}\in\{0,\pi\}$ compatible with this class.  The minimal radial-quartic cyclic truncation considered here has equal-magnitude full-rank extrema, or a flat direction when its cubic vanishes.  A mixed Albert cubic gives a stable alignment within the chosen three-edge subspace $W$ along $P_WH_q^\#$, but neither $W$ nor this charged-vacuum adjoint is yet calculated.

The lift also supplies a candidate ultraviolet skeleton for the nine-link textures of Arkani-Hamed, Figueiredo, Hall and Manzari: six diagonal mass links plus three directed Peirce links form a connected nine-edge graph with one cycle, and a sole balanced Cabibbo link would give a $\pi/2$ holonomy.  This is not yet a derived Yukawa texture: the chiral projection and right-frame lock remain open, and exponentiating the sparse generator is generally dense.  The result is an explicit Cabibbo bridge and a sharper localization of the remaining CKM/PMNS vacuum dynamics, not a completed derivation.
\end{abstract}

\section{Introduction and summary}
\label{sec:intro}

In the exceptional-Jordan programme the charged-fermion square-root masses are degree-3 monomials on the weight triangle of $\Sym^3(\mathbf 3)$ of the family $SU(3)$, arranged along minimal nearest-neighbour chains \cite{Singh2508,TeliSinghMass}. The companion mixing paper \cite{TeliSinghMixing} took the first step from spectra to mixing: an \emph{adjacent-edge lift}, in which each neighbouring pair of generations is assigned the two-state Fritzsch texture whose exact diagonalization angle is the square-root mass ratio,
\begin{equation}
\tan\theta_{ij}=\sqrt{m_i/m_j},
\label{eq:lift}
\end{equation}
together with the transport theorem of the CP Letter \cite{TeliSinghCP}: on the shared Cabibbo edge the up- and down-sector rung amplitudes are complex conjugates, $A_d=A_u^*$, so the relative Cabibbo-block phase obeys the exact local law $\varphi_{12}=-2\chi$, with the \emph{balanced} rotor $\chi=-\pi/4$ (quadrature, $\varphi_{12}=90^\circ$) as the distinguished reference point. Three quantities remained undetermined in \cite{TeliSinghMixing}: the value of $\varphi_{12}$ (extracted from the measured $\Vus$ as $105.7^\circ$), a phenomenological $(2,3)$ normalization $\kappa_{23}\simeq0.56$ (extracted from $\Vcb$), and the direct $(1,3)$ element (``a long-edge bridge problem'').

The question sits in a tradition nearly as old as the CKM matrix itself \cite{Cabibbo,KM}. Ever since Gatto, Sartori and Tonin, and then Fritzsch, Weinberg, and Wilczek and Zee, tied the Cabibbo angle to a square-root mass ratio \cite{GST,Fritzsch,Weinberg77,WilczekZee}, the recurring hope has been that the mixing matrix is not fundamental data but arithmetic of the mass spectrum. That hope produced GUT-scale mass relations \cite{GeorgiJarlskog}, the texture-zero programme and its systematic surveys \cite{BLM,RRR,LudlGrimus,XingZhao,FritzschXing}, the Froggatt--Nielsen mechanism \cite{FroggattNielsen}, and careful analyses of which texture predictions are general and which are artefacts of a basis choice \cite{HallRasin,BEF}. The present programme belongs to this lineage and differs from it in one decisive respect: conditional on the parent construction's Jordan spread, minimal chains and occupied-node assignments, the mass ratios are predicted rather than refitted for the CKM analysis. The node values and hence every rung angle are therefore fixed before mixing is addressed, and nothing can be re-chosen for the CKM matrix's benefit. For the broader family of division-algebra approaches to standard-model structure, see \cite{GunaydinGursey,Furey,TodorovDV,Boyle,GresnigtCl10}. Experimental values throughout are global-fit results \cite{PDG,CKMfitter,UTfit}.

This paper completes that analysis as far as the present structure permits, and states precisely what remains. The logic and results, tagged \Dtag\ (derived), \Ptag\ (programme input), \Otag\ (open), are:

\begin{enumerate}
\item \textbf{An operator-level mass-modulus audit} (Sec.~\ref{sec:finite}). Let $X_f$ be the sectoral Jordan element with spectral values $(a_f,b_f,c_f)$. The associative spectral algebra generated by this one element defines an ordinary operator $\widehat X_f$ on a three-dimensional spectral space. Compressing the multiplicative lift $\widehat X_f^{\otimes3}|_{\Sym^3}$ onto the three occupied chain nodes gives a linear root-mass operator $\mathcal R_f$ whose singular values are exactly the monomials used in the mass construction; $\mathcal Q_f=\mathcal R_f^\dagger\mathcal R_f$, normalized to its largest eigenvalue, therefore has spectrum $(m_{f1}/m_{f3},m_{f2}/m_{f3},1)$ \Dtag~(conditional on the parent chain and node assignment \Ptag). It embeds in a self-adjoint finite Dirac block. This is a successful operator realization of the predicted mass ratios, but it is not a representation of the full Albert algebra and does not fix a unique Yukawa matrix.
\item \textbf{Quadratic spacetime, cubic particle data, and one finite bridge} (Secs.~\ref{sec:quadraticcubic} and \ref{sec:leavesDF}). The $E_8\supset SU(3)_{\rm geom}\times E_6$ branching separates a geometric factor from the internal $E_6$ acting on the Albert algebra. The Preprints.org scaffolding paper assigns the two geometric $SU(3)$ factors to a split-bioctonionic $(3,3)$ base, two Lorentzian leaves and real four-dimensional internal fibres, while placing the $E_6$ fields on that scaffold \cite{Scaffolding}. We give a further Jordan-algebraic formulation: the base is modelled by the quadratic norm of $J_2(\mathbb H_s)$, the particle sector by the cubic norm of $J_3(\mathbb O_{\mathbb C})$, and their block corner obeys $N_3(\operatorname{diag}(A,\lambda))=\lambda N_2(A)$. This rank-two/rank-three refinement and its proposed soldering interpretation are results and postulates of the present paper, not claims attributed to Ref.~\cite{Scaffolding}. The mass-ratio paper starts before symmetry breaking from two isomorphic copies $J_L\oplus J_R$ of $J_3(\mathbb O_{\mathbb C})$ and, after distinct vacuum choices, identifies them with flavour/charge and square-root-mass frames, respectively \cite{Singh2508}. The proposed $SO(3,3)$ BF/Plebanski dynamics produces two overlapping four-dimensional descendants, with one branch carrying weak-force geometry \cite{WesleySinghIsidroBF}; the emergence paper supplies a compatible operator-level construction \cite{SinghEmergence}. We propose the branchwise identification $(\Sigma_L,J_L)\leftrightarrow$ weak/flavour and $(\Sigma_R,J_R)\leftrightarrow$ square-root mass. The finite Yukawa map connects the two. The norm identities are algebraic facts \Dtag; their physical pairing and the gluing are programme postulates \Ptag, not theorems of the cited constructions.
\item \textbf{A universal chiral channel, and a factorization limit} (Sec.~\ref{sec:universalsoldering}). In the split Albert real form, the full Peirce decomposition is $\mathbf1\oplus\mathbf{16}\oplus\mathbf{10}$ under $\operatorname{Spin}(5,5)$. Choosing $\mathbb H_s\subset\mathbb O_s$ refines $\mathbf{10}$ to the spacetime $\mathbf6_{(3,3)}$ plus a $\mathbf4_{(2,2)}$. Representation theory of the Albert cubic forces this complement to pair the two opposite $\operatorname{Spin}(3,3)$ Weyl modules. This identifies a candidate universal left--right soldering channel \Dtag, conditional on the real-form and spacetime-corner choices \Ptag. If $Y_f=B\otimes\Xi_f$ with common $B$, then $B$ changes only the overall chiral normalization and $V_{\rm CKM}=U_{\Xi_u}^\dagger U_{\Xi_d}$ \Dtag. Consequently the scalar $\lambda_0$ is radial, $B$ can solder chirality, and the family-dependent kernel $\Xi_f$ remains the uncomputed source of mixing \Otag.
\item \textbf{Mass ratios and mixing as two parts of one finite Dirac operator} (Sec.~\ref{sec:finite}). For $H_f=Y_fY_f^\dagger$, the spectrum of $H_f$ contains the squared Yukawa singular values, while its spectral frame is $U_f$; therefore $V_{\rm CKM}=U_u^\dagger U_d$ is the relative frame of the two quark blocks, up to quark rephasings. The exceptional chain data determine the radial or spectral part $\mathcal Q_f$, while the transport and bridge are asked to determine the angular or frame part. This unifies the bookkeeping, but does not turn the latter into a consequence of the former: the finite-Dirac underdetermination proposition below proves that the spectrum alone cannot fix CKM mixing. \Dtag
\item \textbf{An explicit Peirce-changing lift of the Cabibbo bridge} (Sec.~\ref{sec:albertbridge}). Conditional on the adopted identification of the first two physical families with an $F_1$--$F_2$ Peirce pair \Ptag, the local octonionic rotor $g_\chi=\cos\chi\,e_3-\sin\chi\,e_1$ lifts to the Albert inner derivation ${\cal D}_{12}(-g_\chi)=2[L_{p_1-p_2},L_{F_3(-g_\chi)}]$. It changes the chosen Peirce slot while reproducing exactly the conjugate up/anti-down transport and hence $\varphi_{12}=-2\chi$ \Dtag. This constructs the Cabibbo component of the bridge for that embedding; it does not derive or uniquely select the physical family embedding, the edge strength, the physical $23$ and $31$ quark matrix elements, or the right-sector soldering map \Otag.
\item \textbf{Relation to finite noncommutative geometry and the precise novelty} (Sec.~\ref{sec:ncg}). Chamseddine, Connes and Marcolli already classified finite Dirac operators whose moduli include fermion masses and CKM/PMNS data \cite{CCM}; that broad $D_F$ unity is not new. The present proposal instead attempts to select part of those moduli: exceptional-Jordan chains conditionally determine sectoral positive Yukawa moduli, while transport and the weak-to-mass bridge partially constrain their relative left frames. In particle-physics terms this gives proposed individual mass-ratio patterns, cross-sector relations, two conditional CKM moduli, a virtual-node suppression mechanism and a no-go localization of the remaining long-edge dynamics. The construction is still incomplete and does not determine a unique $D_F$.
\item \textbf{A vacuum-selection obstruction} (Appendix~\ref{app:vacuum}). The finite spectral traces displayed in the emergence construction are sectorwise and hence leave $U_u^\dagger U_d$ flat. Under the minimal dynamical promotion of the two Yukawa blocks to bifundamentals, the unique renormalizable orientation-sensitive invariant is $\operatorname{Tr}(X_uX_d)$, whose stationarity condition is $[X_u,X_d]=0$ for nondegenerate spectra. The minimal vacuum therefore aligns the sectors or leaves their relative frame undetermined; independent right-family symmetries make $W_u,W_d$ invisible at every polynomial order. This conditional theorem identifies the additional ingredients required for nine-link sparsity: vacuum-selected projectors and a right-sector locking structure. \Dtag
\item \textbf{A complementary Peirce-edge truncation test} (Appendix~\ref{app:edgepotential}). The minimal radial-quartic cyclic truncation considered for the three lifted edge amplitudes has only equal-magnitude full-rank extrema, or a flat direction when its cubic vanishes. This is not a no-go theorem for a general $C_3$-invariant quartic potential. Coupling the bridge to a charged-sector $\mathbf{27}$ through $\operatorname{Re}d(\Phi,H_q,H_q)$ gives a stable minimum within the assumed subspace $W$, with $\boldsymbol\rho\parallel P_WH_q^\#$ \Dtag. This shows how a charged vacuum can select unequal edges, but the present theory determines neither $W$ nor $P_WH_q^\#$; a general three-component source fits rather than predicts the three real PMNS angles \Otag.
\item \textbf{The lift's three-generation extension: one family excluded, one adopted} (Sec.~\ref{sec:transport}). Extending the two-state textures to eigenvalue-matched zero-diagonal Hamiltonian textures on the full chains fails decisively (non-unitarity of $20$--$69\%$ across every admissible sign assignment); the failure is reported, with code, in Appendix~\ref{app:failed} \Dtag. We adopt the \emph{transport} extension: the ordered product of rung rotations with angles \eqref{eq:lift} along the chain, with mass eigenstates identified with the chain nodes as the ladder normalization dictates. This extension is exactly unitary and preserves the successful two-generation relations. This is a modelling choice supported by the exclusion, not forced by it (general nearest-neighbour textures with free diagonals remain viable); adjacency itself, however, is a consequence of the minimality principle \Dtag.
\item \textbf{$\kappa_{23}$ from the virtual node, under a stated amplitude postulate} (Sec.~\ref{sec:kappa}). The down chain is $a^2b\to abc\to ac^2\to c^3$ with generations at the first, second and \emph{fourth} nodes; defining the effective $2\to3$ angle by the two-step projected transport \emph{amplitude} through the unoccupied $ac^2$ node,
\begin{equation}
\sin\theta^{d}_{23}\;=\;\sin\theta_2\,\sin\theta_3,
\qquad \tan\theta_2=\tfrac{1}{1+\delta},\quad\tan\theta_3=\tfrac{1-\delta}{1+\delta},
\end{equation}
numerically $0.1232$ against the naive $\sqrt{m_s/m_b}=0.1491$, gives at the level of the $\Vcb$ difference form an effective $\kappa_{23}^{\rm eff}=0.633$, i.e.\ $\Vcb=0.0422$, $+0.8\%$ from the 2026 global-fit $0.04188$, with no fitted continuous CKM parameter after the stated structural choices \Dtag~(given the amplitude postulate \Ptag). The up chain has no virtual node and is unsuppressed; the previously unexplained up--down asymmetry of the $(2,3)$ sector is a counting statement about chain nodes. Two fragilities are disclosed in Sec.~\ref{sec:kappa}: alternative re-unitarization conventions for the composed transport spread $\Vcb$ widely, and the result is strongly sensitive to the parent theory's $\sqrt{m_c/m_t}$.
\item \textbf{The phase sector is fixed by one rule and one no-go} (Sec.~\ref{sec:phases}). Phases are inserted only where the conjugation theorem creates them: on the shared Cabibbo edge, with the balanced value $\varphi_{12}=90^\circ$ \Ptag. A relative phase between the distinct $(2,3)$ edges ($C$ for down, $B$ for up) is \emph{excluded}: the scan of Lemma~\ref{lem:nogo} shows any such phase repairs $\Vub$ only at the cost of destroying $\Vcb$ \Dtag.
\item \textbf{The conditional no-fit layer, and its one deficit} (Sec.~\ref{sec:assembly}). Given the parent node assignments, transport reading, virtual-node amplitude rule, real $(2,3)$ blocks and balanced shared-edge rotor, with $\delta^2=3/8$ and no fitted continuous CKM parameters: $\Vus=0.2371$ $(+5.3\%)$, $\Vcb=0.0422$ $(+0.8\%)$, $\Vts=0.0411$ $(-0.2\%)$, $\Vtd=0.0099$ $(+14.5\%)$, and the exact Fritzsch ratio $\Vub/\Vcb=\sqrt{m_u/m_c}=0.0424$, a factor $2.1$ below the measured $0.090$. The long-standing Fritzsch shortfall is thus reproduced, localized, and shown to be irreparable within the minimal transport model.
\item \textbf{The remaining two degrees of freedom are fitted, and named} (Sec.~\ref{sec:bridge}). A unitary CKM matrix has exactly four physical parameters. Layer 1 conditionally fixes two of them without fitted continuous CKM parameters ($\theta_{12}$, $\theta_{23}$); the remaining two ($\theta_{13}$, $\delta_{CP}$) are fitted through the minimal long-edge insertion $R_{13}(\varepsilon,\omega)$.  The two balanced-rotor orientations give discrete branches: the displayed branch has $(\varepsilon,\omega)=(0.002079,288.2^\circ)$, while the opposite orientation has $(0.005358,202.2^\circ)$.  Both reproduce $\Vub$ and $\delta_{CP}$. All other CKM quantities ($J$, $\beta$, $\gamma$, $\bar\rho$, $\bar\eta$, $\Vtd$, $\Vts$, $A$) are then functions of the four numbers, and Table~\ref{tab:layer2} confirms internal consistency at the few-percent level; they are not additional predictions.  The fitted bridge is therefore not unique until its discrete orientation is derived.  In particular, $\varepsilon/(s^u_{12}s^u_{23})$ is $0.604$ on the displayed branch but $1.557$ on the opposite branch, so the former value is not a robust target.
\end{enumerate}

Throughout, $\delta=\sqrt{3/8}$ is the universal Jordan spread \cite{Singh2508}; every mass-ratio input is a theoretical ratio rather than a refitted one, so the Layer-1 mass-to-mixing residuals are genuine theory--experiment comparisons. Section~\ref{sec:afhm} uses the finite-Dirac audit and the explicit Albert lift to make the comparison with the nine-link texture scan of Arkani-Hamed, Figueiredo, Hall and Manzari precise. Appendix~\ref{app:vacuum} proves the minimal bifundamental vacuum obstruction; Appendix~\ref{app:edgepotential} gives the radial-quartic truncation result and sourced completion; Appendix~\ref{app:proofs} collects other proofs and closed forms; Appendix~\ref{app:numerics} documents the numerical checks; Appendix~\ref{app:failed} records the failed alternative.

\section{Inputs: chains, monomials, and the adjacent-edge lift}
\label{sec:inputs}

The minimal chains and square-root masses (per-sector units) are \cite{Singh2508,TeliSinghMass}
\begin{align}
\text{down}&:\quad a^2b\ \xrightarrow{E}\ abc\ \xrightarrow{C}\ ac^2\ \xrightarrow{E}\ c^3,
&(d,s,b)\ \text{at nodes}\ (1,2,4),\ \ ac^2\ \text{unoccupied},\\
\text{up}&:\quad a^2b\ \xrightarrow{E}\ abc\ \xrightarrow{B}\ b^2c,
&(u,c,t)\ \text{at nodes}\ (1,2,3),
\end{align}
with $(a,b,c)=(s_F-\delta,\ s_F,\ s_F+\delta)$, $s_d=1$, $s_u=\tfrac23$, $\delta=\sqrt{3/8}$. The node values (square-root masses; the virtual value in brackets) are
\begin{equation}
\begin{aligned}
v^{d}&=\big((1-\delta)^2,\ \ 1-\delta^2,\ \ [\,(1-\delta)(1+\delta)^2\,],\ \ (1+\delta)^3\big),\\
v^{u}&=\big(\tfrac23(\tfrac23-\delta)^2,\ \ (\tfrac23-\delta)\tfrac23(\tfrac23+\delta),\ \ \tfrac49(\tfrac23+\delta)\big),
\end{aligned}
\end{equation}
where for the up chain we list $(a^2b,\,abc,\,b^2c)$ evaluated at $s_u=\tfrac23$. The \emph{rung tangents}, the adjacent node ratios (exactly the square-root mass ratios where both nodes are occupied), are
\begin{equation}
\begin{aligned}
t^{d}_{\rm rung}&=\Big(\tfrac{1-\delta}{1+\delta},\ \tfrac{1}{1+\delta},\ \tfrac{1-\delta}{1+\delta}\Big)=(0.2404,\ 0.6202,\ 0.2404),\\
t^{u}_{\rm rung}&=\Big(\tfrac{2/3-\delta}{2/3+\delta},\ \tfrac{2/3-\delta}{2/3}\Big)=(0.0424,\ 0.0814).
\end{aligned}
\label{eq:rungs}
\end{equation}
In words: each node of a chain carries the square-root mass of one generation, each rung carries the mixing angle of the two generations it joins, and the whole quark flavour sector is to be read off the geometry of two short graphs. The adjacent-edge lift \eqref{eq:lift} is exact for the two-state texture
$M_{ij}=\big(\begin{smallmatrix}0&\sqrt{m_im_j}\\ \sqrt{m_im_j}&m_j-m_i\end{smallmatrix}\big)$ \cite{TeliSinghMixing}, in the lineage of the classic square-root relations \cite{Fritzsch,GST} and of the earlier CKM analysis of the programme \cite{PatelSingh}; its three-generation extension is the subject of the next section. Experimental targets are derived coherently from the PDG 2026 global-fit Wolfenstein central values \cite{PDG}, quoted in Tables~\ref{tab:layer1}--\ref{tab:layer2}.  For the numerical comparison we adopt the common parent-theory matching baseline $\mu_*=M_Z$; the theoretical rung inputs involving running mass ratios are to be interpreted at that same baseline, not mixed across scales.

\section{From the Jordan spectrum to a finite Dirac mass modulus}
\label{sec:finite}

The node prescription above is a spectral statement. To use it in a standard-model Yukawa operator one must show that the selected monomials are the singular values of a genuine linear map, rather than only labels on a diagram. This section performs that audit. It also separates what the audit proves from the chiral information that it cannot supply.

Let $X_f$ be a Hermitian element on the real spectral slice of the complex Albert algebra $J_3(\mathbb O_{\mathbb C})$ used in the parent construction (equivalently, an element of the formally real $J_3(\mathbb O)$ slice for the present eigenvalue problem), with spectral resolution
\begin{equation}
X_f=a_fP_a+b_fP_b+c_fP_c,
\qquad P_i\circ P_j=\delta_{ij}P_i,
\qquad P_a+P_b+P_c=\mathbf1.
\label{eq:jordanspec}
\end{equation}
The exceptional algebra as a whole has no faithful realization as the self-adjoint part of an associative matrix algebra. No such realization is needed here. The Jordan subalgebra generated by one element is power-associative, and its three spectral idempotents can be mapped to the ordinary projections $|a\rangle\langle a|$, $|b\rangle\langle b|$, $|c\rangle\langle c|$ on a spectral space $V_f\simeq\mathbb C^3$. Thus
\begin{equation}
\widehat X_f=a_f|a\rangle\langle a|+b_f|b\rangle\langle b|+c_f|c\rangle\langle c|
\label{eq:Xhat}
\end{equation}
is an ordinary positive operator for the charged sectors considered here.

On $\Sym^3(V_f)$ use normalized occupation states $|pqr\rangle$, $p+q+r=3$. Define the multiplicative symmetric-cube lift
\begin{equation}
\Gamma_3(\widehat X_f)
\equiv
\left.\widehat X_f^{\otimes3}\right|_{\Sym^3(V_f)}.
\label{eq:gamma3}
\end{equation}
It acts diagonally,
\begin{equation}
\Gamma_3(\widehat X_f)|pqr\rangle
=a_f^pb_f^qc_f^r|pqr\rangle .
\label{eq:monomialspec}
\end{equation}
The familiar square-root Clebsch coefficients occur in the action of the $SU(3)$ ladder generators between normalized occupation states. They do not multiply the diagonal eigenvalues in \eqref{eq:monomialspec}. This is why the node monomials can be audited independently of the rung amplitudes.

\begin{proposition}[Occupied-chain root-mass operator]
\label{prop:rootoperator}
Let $P_{\rm occ}^{f}$ project onto the three occupied nodes of the sectoral chain and define
\begin{equation}
\mathcal R_f
=
\left.P_{\rm occ}^{f}\Gamma_3(\widehat X_f)P_{\rm occ}^{f}
\right|_{P_{\rm occ}^{f}\Sym^3(V_f)}.
\label{eq:Rf}
\end{equation}
In the generation ordering used in this paper,
\begin{align}
P_{\rm occ}^{d}&=|210\rangle\langle210|+|111\rangle\langle111|+|003\rangle\langle003|,
&\mathcal R_d&=\operatorname{diag}(a_d^2b_d,a_db_dc_d,c_d^3),\nonumber\\
P_{\rm occ}^{u}&=|210\rangle\langle210|+|111\rangle\langle111|+|021\rangle\langle021|,
&\mathcal R_u&=\operatorname{diag}(a_u^2b_u,a_ub_uc_u,b_u^2c_u).
\label{eq:Rud}
\end{align}
The omitted down state $|102\rangle$ is precisely the virtual $ac^2$ node. If $\rho_{f3}$ denotes the largest occupied eigenvalue and
\begin{equation}
\mathcal Q_f=\frac{\mathcal R_f^\dagger\mathcal R_f}{\rho_{f3}^2},
\label{eq:Qf}
\end{equation}
then
\begin{equation}
\operatorname{spec}\mathcal Q_f
=\left(\frac{m_{f1}}{m_{f3}},\frac{m_{f2}}{m_{f3}},1\right)
\label{eq:Qspec}
\end{equation}
in exactly the root-mass convention of the parent construction. \Dtag~(conditional on its chain and occupied-node assignment \Ptag)
\end{proposition}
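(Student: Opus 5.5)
The plan is a direct computation in the occupation-number basis, using the diagonal action \eqref{eq:monomialspec} of the symmetric-cube lift. First I will fix the dictionary between the monomial labels and occupation states: $|pqr\rangle$ carries $a^pb^qc^r$. Hence $a^2b\leftrightarrow|210\rangle$, $abc\leftrightarrow|111\rangle$, $ac^2\leftrightarrow|102\rangle$, $c^3\leftrightarrow|003\rangle$ and $b^2c\leftrightarrow|021\rangle$. Reading off the chains of Sec.~\ref{sec:inputs} with the stated generation assignments (down generations at nodes $1,2,4$; up generations at nodes $1,2,3$) gives exactly the projectors $P^d_{\rm occ}$ and $P^u_{\rm occ}$ in \eqref{eq:Rud}. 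The only down-chain state left out is $|102\rangle$, which is the virtual $ac^2$ node.

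Second, since $\Gamma_3(\widehat X_f)$ is diagonal in the orthonormal occupation basis, it commutes with any projector that is diagonal in that basis. So $P_{\rm occ}^f\Gamma_3P_{\rm occ}^f=\Gamma_3P^f_{\rm occ}$, and its restriction to the range of $P^f_{\rm occ}$ is the diagonal matrix of the three occupied eigenvalues, listed in generation order. This gives the displayed $\mathcal R_d$ and $\mathcal R_u$. No Clebsch factors appear, because these arise only from off-diagonal ladder generators, as remarked after \eqref{eq:monomialspec}. This commutation step is also what makes the compression of the multiplicative lift a genuine linear operator, rather than merely a list of labels.

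Third, I will compute the spectrum. The spectral values are real, and for the charged sectors they are positive: $a_f=s_f-\delta>0$ since $\delta<2/3\le s_f$. Hence $\mathcal R_f$ is self-adjoint and positive, its singular values equal its diagonal entries, and $\mathcal R_f^\dagger\mathcal R_f=\operatorname{diag}(v_1^2,v_2^2,v_3^2)$ with $v_i$ the occupied node values. The largest of these is $\rho_{f3}$. To see this, check the ordering $a^2b<abc<c^3$ and $a^2b<abc<b^2c$, which follows from $a<b<c$ together with positivity; each adjacent ratio is a product of factors $a/c$, $b/c$ or $a/b$ that are less than $1$. So the normalization by $\rho_{f3}^2$ puts $1$ last.

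Finally, I will invoke the parent root-mass convention, in which the node value is $\sqrt{m}$ in per-sector units, i.e.\ $v_i=\sqrt{m_{fi}}$ up to a common sector factor. The common factor cancels in the ratios, so $v_i^2/\rho_{f3}^2=m_{fi}/m_{f3}$, which is \eqref{eq:Qspec}. The only real obstacle is this last identification: it is not a theorem but the programme input \Ptag, namely that the occupied monomials are the square-root masses and that the node assignments are as stated. I will therefore state it as the hypothesis on which the \Dtag\ conclusion is conditioned. I will also note that positivity of $a_f$ must be checked explicitly for $s_u=2/3$, where $a_u=2/3-\sqrt{3/8}\approx0.054$ is small but positive. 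Without this, singular values would be absolute values and the ordering check would need modification.
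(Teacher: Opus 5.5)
Your proposal is correct and follows the paper's own proof. The paper argues that diagonality of $\Gamma_3(\widehat X_f)$ in the occupation basis makes the compressed operators diagonal. Positivity of the occupied monomials at $s_d=1$, $s_u=2/3$, $\delta=\sqrt{3/8}$ then makes them the singular values, and the parent root-mass prescription \Ptag\ turns the normalized squares into mass ratios. You follow exactly these steps. Your explicit check that $c^3$ (resp.\ $b^2c$) is the largest occupied node, which puts the entry $1$ in the third-generation slot, is a detail the paper leaves implicit but is consistent with it.
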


\begin{proof}
Equation~\eqref{eq:monomialspec} makes both compressed operators in \eqref{eq:Rud} diagonal. Their singular values are therefore the absolute values of the three occupied monomials, which are positive for $s_d=1$, $s_u=2/3$ and $\delta=\sqrt{3/8}$. The parent prescription identifies these monomials with quantities proportional to $\sqrt{m_{fi}}$. Squaring and normalizing to the heaviest occupied node gives \eqref{eq:Qspec}. No additional singular-value assumption is required.
\end{proof}

This proposition is narrower than the assertion that the exceptional Jordan matrix itself is a Hilbert-space Dirac operator. It represents only the associative spectral algebra generated by $X_f$ and then applies the ordinary tensor functor. It also shows why the Jordan quadratic representation $U_X$ is not by itself the whole answer: $U_X(P_i)=x_i^2P_i$ squares the three original Jordan eigenvalues, whereas the physical construction first forms selected degree-3 monomials and only then squares them through $\mathcal R_f^\dagger\mathcal R_f$.

There is a second limitation. The projector $P_{\rm occ}^f$ is not determined by the unordered eigenvalues of $X_f$ alone: it contains the minimal-chain and generation assignment of the parent mass construction. Equation~\eqref{eq:Rf} is therefore canonical only relative to the chosen Jordan frame and occupied chain, with the projector co-transported under a change of frame. Proposition~\ref{prop:rootoperator} validates the operator realization of that prescription; it does not independently derive the prescription or promote it to an automorphism-invariant function of $X_f$.

\subsection{Yukawa family and finite Dirac block}

Let $U_f$ be the left-handed frame supplied by the transport construction of Sec.~\ref{sec:transport}, let $W_f\in U(3)$ be the corresponding right-handed frame, and let $y_{f3}(\mu_*)$ be the heaviest Yukawa coupling of the sector at a common matching scale $\mu_*$. The general Yukawa matrix compatible with Proposition~\ref{prop:rootoperator} is its singular-value decomposition,
\begin{equation}
\boxed{\quad Y_f(\mu_*)=y_{f3}(\mu_*)\,U_f\mathcal Q_fW_f^\dagger .\quad}
\label{eq:Ysvd}
\end{equation}
The sectoral finite Dirac block is then
\begin{equation}
\mathcal D_F^{(f)}=
\begin{pmatrix}
0&Y_f\\ Y_f^\dagger&0
\end{pmatrix},
\qquad
(\mathcal D_F^{(f)})^2=
\begin{pmatrix}
Y_fY_f^\dagger&0\\0&Y_f^\dagger Y_f
\end{pmatrix},
\label{eq:finiteD}
\end{equation}
with spectrum
\begin{equation}
\operatorname{spec}\mathcal D_F^{(f)}
=\left\{\pm y_{f3}(\mu_*)\frac{m_{f1}}{m_{f3}},
\pm y_{f3}(\mu_*)\frac{m_{f2}}{m_{f3}},
\pm y_{f3}(\mu_*)\right\}.
\label{eq:Dspec}
\end{equation}
Multiplication by $v(\mu_*)/\sqrt2$ converts this dimensionless Yukawa block into the running mass block. Equations~\eqref{eq:Ysvd}--\eqref{eq:Dspec} are matching conditions in one renormalization scheme at $\mu_*$, not renormalization-group invariants; comparison with masses quoted at another scale requires standard-model running and threshold matching. Particle--antiparticle doubling, colour multiplicities, the real structure and possible Majorana terms can be added in the usual finite spectral data; they do not alter the charged-sector singular-value statement \cite{CCM}.

\subsection{Placement in the fluctuated Dirac operator}
\label{sec:fulloperator}

The emergence analysis of the programme separates four operator contributions rather than identifying every octonionic datum with the same $D_F$ \cite{SinghEmergence}.  The related $SO(3,3)$ BF construction supplies the proposed symmetry-breaking and two-leaf dynamics, but does not by itself determine the finite family kernel \cite{WesleySinghIsidroBF}. A compatible pre-reduction gauge completion may be written schematically as
\begin{equation}
\mathcal D_{A,6}
=\Gamma^A\!\left(\nabla_A\otimes\mathbf1
+i\,\mathbf1\otimes\mathcal A_A^{\rm oct}\right)
+\Gamma_7\otimes\mathcal D_{F,\Phi},
\qquad
\mathcal D_{F,\Phi}
=D_F+\Phi_{LR}+J_F\Phi_{LR}J_F^{-1}.
\label{eq:Dgauge6}
\end{equation}
Here the split-biquaternionic directions supply the $(3,3)$ kinetic operator, the non-scalar octonionic bosonic directions supply internal vector connections including $SU(3)_c$, and the scalar/bifermionic channel supplies the left--right Higgs bridge. After localization and BF reduction, the corresponding observed-leaf operator is
\begin{equation}
D_A^{\rm phys}
=D_\Sigma^{\rm phys}\otimes\mathbf1
+A_\Sigma^{\rm phys}
+\gamma_5\otimes\mathcal D_{F,\Phi}^{\rm phys}.
\label{eq:Dleaf}
\end{equation}
The finite block constructed in Eqs.~\eqref{eq:Ysvd}--\eqref{eq:finiteD} is the dimensionless Yukawa datum $\mathcal D_F^Y$. We define its broken-vacuum normalization by the matching convention
\begin{equation}
\left.\mathcal D_{F,\Phi}^{\rm phys}\right|_{\langle H\rangle}
=\frac{v(\mu_*)}{\sqrt2}\,\mathcal D_F^Y(\mu_*).
\label{eq:vacuumDF}
\end{equation}
Equation~\eqref{eq:vacuumDF} is not a derivation of $v$ or of the Higgs vacuum. It fixes conventions: $v/\sqrt2$ belongs either to the vacuum value of the fluctuated operator or to the displayed mass block, but not to both. The simpler product formula $D_{3,3}\otimes\mathbf1+\Gamma_7\otimes(v/\sqrt2)\mathcal D_F^Y$ is the gauge-free broken-vacuum specialization of \eqref{eq:Dgauge6}, not a competing Dirac operator.

The two octonionic roles do not double count one another. On a quark block the family modulus acts as $\mathcal Q_f\otimes\mathbf1_3^{\rm colour}$, whereas the colour connection contains $\mathbf1_{\rm family}\otimes T_c^a$; hence
\begin{equation}
\big[\mathcal Q_f\otimes\mathbf1_3^{\rm colour},
\mathbf1_{\rm family}\otimes T_c^a\big]=0.
\label{eq:colourcommute}
\end{equation}
The octonionic colour fluctuation is therefore gauge data, while the exceptional-Jordan construction supplies family spectral data. Equation~\eqref{eq:colourcommute} is a consistency check of the observed-leaf tensor factorization; it does not by itself derive that factorization from the pre-breaking octonionic theory.

This result refines the emergence paper's deliberately minimal factorization $Y_f=y_fY_J$. The common schematic $Y_J$ is replaced, sector by sector, by
\begin{equation}
Y_f=y_{f3}U_f\mathcal Q_fW_f^\dagger.
\label{eq:YJrefine}
\end{equation}
The replacement changes neither the observed gauge traces nor the colour multiplicities. It does refine the charged-sector finite invariants entering the heat-kernel coefficients to
\begin{equation}
a_{\rm ch}=\sum_{f=e,u,d}n_f|y_{f3}|^2\operatorname{Tr}\mathcal Q_f^2,
\qquad
b_{\rm ch}=\sum_{f=e,u,d}n_f|y_{f3}|^4\operatorname{Tr}\mathcal Q_f^4,
\qquad (n_e,n_u,n_d)=(1,3,3),
\label{eq:abrefine}
\end{equation}
with the neutrino and possible Majorana contributions added separately once their moduli are specified. The unknown frames cancel from these traces. Consequently the spectral action needs exactly the modulus that the occupied-chain construction supplies, whereas CKM mixing needs the relative left frames that it does not.

Equations~\eqref{eq:Dgauge6}--\eqref{eq:abrefine} are an operator dictionary, not a derivation of the full product geometry from the pre-geometric dynamics. Nor do they claim that a complete exceptional spectral triple has been constructed; genuinely Jordan-geometric formulations require modified axioms \cite{BoyleFarnsworth}.

\subsection{Why quadratic spacetime and cubic particle geometry?}
\label{sec:quadraticcubic}

Because two different Jordan algebras enter the construction, it is useful to pause and explain their distinct roles.  The kinematic-scaffolding paper separates the two geometric $SU(3)$ factors from the internal $E_6$ factors and constructs a $(3,3)$ base together with four-dimensional internal fibres \cite{Scaffolding}.  It does not formulate this as the rank-two/rank-three Jordan decomposition below; that is the refinement proposed here. For either chiral copy, the exceptional branching used in the parent construction has the schematic complex form \cite{SinghEmergence}
\begin{equation}
\mathfrak e_8
= (\mathfrak{su}(3)_{\rm geom},\mathbf1)
\oplus(\mathbf1,\mathfrak e_6)
\oplus(\mathbf3,\mathbf{27})
\oplus(\overline{\mathbf3},\overline{\mathbf{27}}),
\label{eq:e8branchped}
\end{equation}
with the appropriate real forms understood. The first two summands commute, while the mixed summands link them. The point is therefore not that one $SU(3)$ directly generates both spacetime and particles. Rather, $SU(3)_{\rm geom}$ labels the proposed geometric branch, whereas the $\mathbf{27}$ is naturally modelled by the complex Albert algebra $J_3(\mathbb O_{\mathbb C})$ on which $E_6$ acts. The familiar internal subgroup $SU(3)\times SU(3)\times SU(3)\subset E_6$ is a resolution of this Albert-algebra sector, not a second copy of the geometric $SU(3)$.

On the spacetime side, write an element of the six-dimensional Jordan algebra of $2\times2$ Hermitian split-quaternionic matrices as
\begin{equation}
A=\begin{pmatrix}\alpha&q\\ \bar q&\beta\end{pmatrix}
\in J_2(\mathbb H_s),
\qquad
N_2(A)=\det A=\alpha\beta-N(q).
\label{eq:J2norm}
\end{equation}
With $t=(\alpha+\beta)/2$ and $z=(\alpha-\beta)/2$,
\begin{equation}
N_2(A)=t^2-z^2-N(q).
\label{eq:J2signature}
\end{equation}
The split-quaternion norm $N(q)$ has signature $(2,2)$, so $N_2$ has signature $(3,3)$. Its connected determinant-preserving group is $SO_0(3,3)$, with spin cover
\begin{equation}
\operatorname{Spin}(3,3)\simeq SL(4,\mathbb R).
\label{eq:spin33}
\end{equation}
Thus $J_2(\mathbb H_s)$ supplies more than a six-dimensional vector space: its quadratic Jordan determinant supplies the interval and null cone of the proposed six-dimensional base.

The route from the doubled geometric branching to this real form is a programme identification, not a consequence of the compact groups alone \cite{Scaffolding}. Each geometric factor is reduced as
\begin{equation}
SU(3)_{\rm geom}^{L,R}\longrightarrow
SU(2)_{\rm geom}^{L,R}\times U(1)_{\rm geom}^{L,R},
\label{eq:geombreak}
\end{equation}
and the two branches are organized by the split-complex unit $\omega^2=+1$, or equivalently by the idempotents $e_\pm=(1\pm\omega)/2$. This provides a canonical grading with which the two branch contributions can be assigned opposite signs. A qualification is important: the compact rotations act through $SO(3)_L\times SO(3)_R$, the maximal compact subgroup of $SO(3,3)$, but they do not generate the noncompact boosts or prove the split metric. The $(3,3)$ statement rests on the explicit quadratic form \eqref{eq:J2signature}; $\omega$ supplies a natural place for the relative sign, while the soldering of the two rank-three bundles to the tangent space remains a programme input.

On the particle side, a general element $X\in J_3(\mathbb O_{\mathbb C})$ carries the cubic Jordan determinant
\begin{equation}
N_3(X)=\det X,
\label{eq:J3norm}
\end{equation}
whose reduced structure group is $E_6(\mathbb C)$ and whose automorphism group is $F_4(\mathbb C)$, up to the usual connectedness and finite-centre qualifications \cite{SpringerVeldkamp}. The invariant map $\mathbf{27}^{\otimes3}\to\mathbf1$ is the representation-theoretic expression of this cubic norm. In the present flavour application, the three Jordan eigenvalues, their degree-three monomials and the occupied $\Sym^3(\mathbf3)$ chains are particle spectral data. They are not spacetime coordinates.

The contrast can be summarized as follows.
\begin{center}
\small
\begin{tabular}{>{\raggedright\arraybackslash}p{0.22\textwidth}>{\raggedright\arraybackslash}p{0.31\textwidth}>{\raggedright\arraybackslash}p{0.31\textwidth}}
\toprule
& spacetime & particles \\
\midrule
Jordan algebra & $J_2(\mathbb H_s)$ & $J_3(\mathbb O_{\mathbb C})$ \\
rank & $2$ & $3$ \\
invariant & quadratic $N_2$ & cubic $N_3$ \\
structure group & $SO_0(3,3)$ & $E_6(\mathbb C)$ \\
proposed role & $(3,3)$ interval, null cone and causal base & $\mathbf{27}$ spectrum, charges and invariant couplings \\
\bottomrule
\end{tabular}
\end{center}
In short, rank two governs the proposed spacetime geometry, while rank three governs the particle sector.

There is also an exact relation between the two invariants. After choosing an embedding $\mathbb H_s\subset\mathbb O_{\mathbb C}$ and a primitive idempotent that selects a $2+1$ block, one may embed $A\in J_2(\mathbb H_s)$ as
\begin{equation}
X=\begin{pmatrix}A&0\\0&\lambda_0\end{pmatrix}
\in J_3(\mathbb O_{\mathbb C}).
\label{eq:cornerembed}
\end{equation}
The Albert determinant then factorizes exactly:
\begin{equation}
\boxed{N_3(X)=\lambda_0 N_2(A).}
\label{eq:cornerfactor}
\end{equation}
For fixed nonzero $\lambda_0$, the cubic norm restricts to the quadratic spacetime norm. The scalar $\lambda_0$ rescales this corner norm but does not rotate its Peirce frame; it is therefore a radial or matching-scale datum, not a CKM or PMNS mixing variable. This motivates the phrase that spacetime is a rank-two Peirce corner of the rank-three matter geometry. The relevant corner belongs to the split Albert real form $J_3(\mathbb O_s)$; it does not lie in the formally real $J_3(\mathbb O)$, although both real forms sit inside $J_3(\mathbb O_{\mathbb C})$. The corner relation is therefore not a unique dynamical reduction: the embedding of $\mathbb H_s$, the primitive idempotent and the physical real form must all be selected. This distinction is directly relevant to the CKM problem. The same missing vacuum-selected projectors that obstruct a preferred chiral texture may also be the data needed to make the quadratic corner of the cubic geometry physical. The algebraic factorization \eqref{eq:cornerfactor} is exact; identifying its projectors with the family and spacetime projectors of the broken vacuum remains an open programme target \Otag.

\subsection{The full Peirce complement and a universal chiral-soldering channel}
\label{sec:universalsoldering}

The block-diagonal restriction \eqref{eq:cornerembed} suppresses the part of the Albert algebra that is relevant to chirality.  Let $p$ be the primitive idempotent selecting the displayed corner.  In the split real form,
\begin{equation}
 J_3(\mathbb O_s)
 =\mathbb Rp\oplus J_{1/2}(p)\oplus J_0(p)
 \simeq \mathbf1\oplus\mathbf{16}\oplus\mathbf{10}
 \label{eq:Peirce1510}
\end{equation}
under the corresponding $\operatorname{Spin}(5,5)$ subgroup of $E_{6(6)}$.  Write
\begin{equation}
 X=\lambda_0p+\Psi+V,
 \qquad \Psi\in J_{1/2}(p),\qquad
 V\in J_0(p)\simeq J_2(\mathbb O_s).
 \label{eq:PeircefullX}
\end{equation}
Up to conventional signs and normalizations, the cubic invariant has the standard form
\begin{equation}
 N_3(X)=\lambda_0 q_{5,5}(V)+\langle\Psi,\Gamma(V)\Psi\rangle .
 \label{eq:PeircecubicSpin55}
\end{equation}
Thus the scalar multiplies the vector norm, while the ten-vector acts by Clifford multiplication on the chiral spinor.  Equation~\eqref{eq:PeircecubicSpin55}, rather than the scalar corner identity alone, contains the possible soldering information.

Choose a compatible split-quaternion subalgebra and a Cayley--Dickson unit $\ell$,
\begin{equation}
 \mathbb O_s=\mathbb H_s\oplus\mathbb H_s\ell .
 \label{eq:splitOH}
\end{equation}
The Peirce ten-vector then decomposes orthogonally as
\begin{equation}
 \boxed{
 J_2(\mathbb O_s)
 =J_2(\mathbb H_s)\oplus\mathbb H_s\ell,
 \qquad
 \mathbf{10}
 =(\mathbf6,\mathbf1,\mathbf1)
 \oplus(\mathbf1,\mathbf2,\mathbf2) .}
 \label{eq:J2Osdecomp}
\end{equation}
Here the subgroup and spinor branching are
\begin{align}
 \operatorname{Spin}(5,5)&\supset
 \operatorname{Spin}(3,3)\times\operatorname{Spin}(2,2)
 \simeq SL(4,\mathbb R)\times SL(2,\mathbb R)_+\times SL(2,\mathbb R)_-,
 \label{eq:Spin55branch}\\
 \mathbf{16}&=(\mathbf4,\mathbf2,\mathbf1)
 \oplus(\mathbf4^*,\mathbf1,\mathbf2).
 \label{eq:spinor16branch}
\end{align}
The determinant on $J_2(\mathbb H_s)$ has signature $(3,3)$, while the complementary split-quaternion norm has signature $(2,2)$; together they give the $(5,5)$ vector norm.  The dimensions parallel the six-dimensional base and four-dimensional internal fibre of the scaffolding construction \cite{Scaffolding}, but the $J_2(\mathbb O_s)$ embedding, the $(2,2)$ complementary signature and the $\operatorname{Spin}(5,5)$ branching are additional algebraic statements made here.

Write $V=A+B$, with $A\in J_2(\mathbb H_s)$ and $B\in\mathbb H_s\ell$, and write the two summands of $\Psi$ as $\psi^{a\alpha}$ and $\chi_a{}^{\dot\alpha}$.  The subgroup representations force Eq.~\eqref{eq:PeircecubicSpin55} to have the invariant structure
\begin{align}
 N_3(X)={}&\lambda_0\bigl(q_{3,3}(A)+q_{2,2}(B)\bigr)
 +c_1 A_{ab}\epsilon_{\alpha\beta}\psi^{a\alpha}\psi^{b\beta}
 +c_2\widetilde A^{ab}\epsilon_{\dot\alpha\dot\beta}
       \chi_a{}^{\dot\alpha}\chi_b{}^{\dot\beta}
 \notag\\
 &+c_3B_{\alpha\dot\beta}\psi^{a\alpha}\chi_a{}^{\dot\beta},
 \label{eq:Albertchiralbranch}
\end{align}
where the nonzero $c_i$ depend on normalization and the relative signs are convention-dependent.  The first two spinor terms are the $\operatorname{Spin}(3,3)$ vector--Weyl pairings generated by $A$.  The last term is a $\operatorname{Spin}(3,3)$ scalar that pairs its opposite Weyl modules.  Consequently
\begin{equation}
 B\in(\mathbf1,\mathbf2,\mathbf2)
 \label{eq:Bchannel}
\end{equation}
has the representation character of a universal chiral-soldering channel.  This conclusion is algebraic.  Interpreting $B$ as a physical four-dimensional Higgs or finite Dirac field additionally requires the real-form selection, the localization from $\operatorname{Spin}(3,3)$ to the observed Lorentz group, and an invariant coupling to the programme's physical Clifford-module fermions.  Neither the kinematic base-and-fibre construction nor the BF reduction establishes those finite-operator steps \cite{Scaffolding,WesleySinghIsidroBF}.

The separation also gives a useful conditional factorization theorem.  Suppose that, after localization and projection onto a common nonzero chiral mode, the sectoral finite bridges have the product form
\begin{equation}
 Y_f=B\otimes\Xi_f,\qquad f=u,d,
 \label{eq:YBXi}
\end{equation}
where the same $B$ acts on the universal chiral factor and $\Xi_f$ acts on family space.  Then
\begin{equation}
 Y_fY_f^\dagger=(BB^\dagger)\otimes(\Xi_f\Xi_f^\dagger).
 \label{eq:HfactorBXi}
\end{equation}
Projection onto any common eigenmode $s$ of $BB^\dagger$ with eigenvalue $b_s>0$ gives
\begin{equation}
 H_f^{\rm family}=b_s\,\Xi_f\Xi_f^\dagger,
 \qquad
 \boxed{V_{\rm CKM}=U_{\Xi_u}^\dagger U_{\Xi_d}}.
 \label{eq:CKMXifactor}
\end{equation}
The universal factor fixes an overall chiral normalization but not the family eigenvectors.  Therefore $B$ can solve a more primitive left--right soldering problem without determining CKM mixing.  The scalar $\lambda_0$ is still more limited: it sets a radial normalization in Eqs.~\eqref{eq:PeircecubicSpin55} and \eqref{eq:Albertchiralbranch}, but carries no orientation.  Nontrivial mixing requires the family kernel $\Xi_f$, equivalently vacuum-selected Peirce projectors and edge matrix elements.  Equations~\eqref{eq:YBXi}--\eqref{eq:CKMXifactor} are conditional: sector-dependent $B_f$, entanglement of chiral and family indices, or different physical-mode projections would invalidate the cancellation and would themselves constitute additional bridge dynamics.

Finally, the two geometric $SU(3)$ factors do not by themselves generate the split Albert automorphism group.  At the complex level,
\begin{equation}
 \mathfrak f_4^{\mathbb C}
 =(\mathbf8,\mathbf1)\oplus(\mathbf1,\mathbf8)
 \oplus(\mathbf3,\overline{\mathbf6})
 \oplus(\overline{\mathbf3},\mathbf6),
 \label{eq:F4A2A2}
\end{equation}
so an $A_2+A_2$ core needs another $36$ mixed generators.  In the present scaffold the two geometric factors lie in different $\mathfrak e_{8L}\oplus\mathfrak e_{8R}$ summands and have vanishing cross-bracket \cite{Scaffolding}.  Moreover, $B$ in Eq.~\eqref{eq:Bchannel} is a coordinate in the Peirce ten-vector, not one of the missing adjoint generators.  Completing the two geometric factors to a common $F_{4(4)}$ would therefore be an additional cross-sector gluing hypothesis, not a consequence of the existing branching.
There is also a real-form mismatch to resolve: the split completion uses an
$\mathfrak{sl}(3,\mathbb R)\oplus\mathfrak{sl}(3,\mathbb R)$ core, whereas the
displayed geometric factors are compact $\mathfrak{su}(3)$ algebras.  A real-form
selection or analytic continuation would therefore be required even after the
missing mixed transformations were supplied.

\subsection{Two overlapping leaves and two Jordan copies}
\label{sec:leavesDF}

The mass-ratio construction begins in the left--right symmetric phase with two isomorphic complex Albert algebras,
\begin{equation}
J_L\oplus J_R
=J_3(\mathbb O_{\mathbb C})_L\oplus J_3(\mathbb O_{\mathbb C})_R,
\qquad
E_{6L}\curvearrowright J_L,
\qquad E_{6R}\curvearrowright J_R,
\label{eq:twoAlbert}
\end{equation}
where the two $E_6$ factors act as the corresponding determinant-preserving structure groups, with the physical real form and covering understood, and no preferred identification of charge and mass directions exists before triality and left--right breaking \cite{Singh2508}. Distinct post-breaking octonionic vacuum choices then give $J_L$ the flavour/charge frame and $J_R$ the square-root-mass frame. In the notation of Ref.~\cite{Singh2508}, the weak basis on the left is not aligned with the Jordan mass basis on the right; this is the stated origin of fermion mixing.

The spacetime construction supplies a parallel doubling. The split-biquaternionic $(3,3)$ base is proposed to yield two overlapping Lorentzian leaves in the $SO(3,3)$ BF/Plebanski construction \cite{WesleySinghIsidroBF}, with a compatible operator-level treatment in Ref.~\cite{SinghEmergence},
\begin{equation}
\Sigma_R=\operatorname{Im}\mathbb H_R\oplus\operatorname{span}\{t_L\},
\qquad
\Sigma_L=\operatorname{Im}\mathbb H_L\oplus\operatorname{span}\{\omega t_R\},
\label{eq:twoleaves}
\end{equation}
with $\Sigma_L$ carrying the weak-force geometry and $\Sigma_R$ the complementary gravity/right branch after BF reduction. The natural synthesis is the branchwise pairing
\begin{equation}
(\Sigma_L,J_L)\longleftrightarrow \text{weak/flavour frame},
\qquad
(\Sigma_R,J_R)\longleftrightarrow \text{square-root-mass frame},
\label{eq:branchpairing}
\end{equation}
and, for each fermion sector, a finite bridge
\begin{equation}
Y_f:\mathcal H_{R,f}^{(\Sigma_R,J_R)}\longrightarrow
\mathcal H_{L,f}^{(\Sigma_L,J_L)}.
\label{eq:leafbridge}
\end{equation}
This brings the geometric and mass-ratio statements into one diagram, but its status must be kept exact: Eqs.~\eqref{eq:twoAlbert} and \eqref{eq:twoleaves} belong to the cited constructions, whereas the cross-identification \eqref{eq:branchpairing} and the claim that the microscopic bridge is induced by gluing on $\Sigma_L\cap\Sigma_R$ are programme postulates \Ptag. No leaf-overlap kernel has yet been calculated.

Moreover, CKM is not simply the left--right overlap matrix. Each $Y_f$ first induces a left mass frame through $H_f=Y_fY_f^\dagger$ on the common weak doublet space; CKM is then the relative frame of the \emph{up and down} left blocks. The two-leaf picture proposes a geometric origin for these induced frames, but does not bypass this standard chiral structure.

\subsection{An explicit Albert-algebra lift of the Cabibbo bridge}
\label{sec:albertbridge}

The preceding discussion identifies a missing finite map.  The local Cabibbo calculation of Ref.~\cite{TeliSinghCP} supplies more: its octonionic rotor admits an exact lift that changes the Albert Peirce slot as well as the octonionic coefficient.  We use the standard coordinates
\begin{equation}
 X=\begin{pmatrix}
 \xi_1&x_3&\bar x_2\\
 \bar x_3&\xi_2&x_1\\
 x_2&\bar x_1&\xi_3
 \end{pmatrix},
 \qquad
 p_i=E_{ii},
 \qquad
 V_{23}=F_1(\mathbb O_{\mathbb C}),\quad
 V_{31}=F_2(\mathbb O_{\mathbb C}),\quad
 V_{12}=F_3(\mathbb O_{\mathbb C}),
 \label{eq:AlbertPeirce}
\end{equation}
where $F_i(x)$ places $x$ and $\bar x$ in the indicated off-diagonal positions and $L_A(X)=A\circ X$ is Jordan left multiplication.  Define
\begin{equation}
 \boxed{
 {\cal D}_{12}(z)=2[L_{p_1-p_2},L_{F_3(z)}]\in\mathfrak f_4(\mathbb C) .}
 \label{eq:D12Albert}
\end{equation}
Because a commutator of Jordan multiplication operators is an inner derivation, ${\cal D}_{12}(z)$ preserves the Albert product and cubic norm.  Direct multiplication gives
\begin{align}
 {\cal D}_{12}(z)F_1(x)&=F_2(\bar x\,\bar z),
 &
 {\cal D}_{12}(z)F_2(y)&=-F_1(\bar z\,\bar y).
 \label{eq:D12action}
\end{align}
For example, $F_3(z)\circ F_1(x)=\tfrac12F_2(\bar x\bar z)$, while $L_{p_1-p_2}$ has eigenvalues $-\tfrac12$ and $+\tfrac12$ on $F_1$ and $F_2$, respectively; Eq.~\eqref{eq:D12action} follows, including its normalization and signs.  This proof uses only products of two octonions and hence does not hide an associativity assumption.

Now take the exact Teli--Singh rung direction
\begin{equation}
 g_\chi=\cos\chi\,e_3-\sin\chi\,e_1,
 \qquad z_{12}=-g_\chi,
 \label{eq:gchiAlbert}
\end{equation}
and adopt the following physical-family embedding in the two fibres joined by Eq.~\eqref{eq:D12action}: $F_1(u_1),F_1(\bar d_1)$ represent the first family and $F_2(u_2),F_2(\bar d_2)$ the second \Ptag.  The Albert algebra does not by itself derive this assignment or make it unique.  The octonionic coefficients are
\begin{equation}
 u_1=e_4+ie_5,\quad u_2=e_6+ie_2,
 \qquad
 \bar d_1=e_5+ie_4,\quad \bar d_2=e_2+ie_6.
 \label{eq:localquarkstates}
\end{equation}
With the Fano convention of Ref.~\cite{TeliSinghCP}, the derivation acts exactly as
\begin{equation}
 {\cal D}_{12}(-g_\chi)F_1(u_1)=F_2(e^{-i\chi}u_2),
 \qquad
 {\cal D}_{12}(-g_\chi)F_1(\bar d_1)=F_2(e^{+i\chi}\bar d_2).
 \label{eq:Alberttransport}
\end{equation}
On each normalized two-state subspace its exponential is therefore
\begin{equation}
 e^{\theta{\cal D}_{12}}F_1(u_1)
 =\cos\theta\,F_1(u_1)+\sin\theta\,F_2(e^{-i\chi}u_2),
 \label{eq:Albertrotor}
\end{equation}
with the conjugate phase in the anti-down channel.  The local theorem
\begin{equation}
 \boxed{\varphi_{12}=-2\chi}
 \label{eq:Albertphaselaw}
\end{equation}
is thus promoted, for the adopted $F_1$--$F_2$ family embedding, from componentwise octonionic transport to a genuine Peirce-changing generator in the complex-linear extension of the Albert derivation algebra.  In particular, the operation on that chosen Peirce pair is no longer represented by an arbitrary tensor product $E_{12}\otimes L_{g_\chi}$.  What remains open is the vacuum theorem that identifies this pair with the physical first and second families.

This is a partial bridge theorem, not yet the full Yukawa kernel.  The order-three automorphism $\Gamma$ of Ref.~\cite{TeliSinghCP} gives canonical algebraic companions
\begin{equation}
 z_{23}=\Gamma z_{12}=\cos\chi\,e_6+\sin\chi\,e_1,
 \qquad
 z_{31}=\Gamma^2z_{12}=\cos\chi\,e_4+\sin\chi\,e_1,
 \label{eq:cyclicdirections}
\end{equation}
and corresponding derivations obtained by cyclically permuting $(p_i,F_i)$.  However, $\Gamma$ does not carry the displayed coloured quark representatives through three physical generations \cite{TeliSinghCP,TeliSinghMixing}.  Equations~\eqref{eq:cyclicdirections} therefore fix canonical Albert support, not the physical $23$ and $31$ quark matrix elements.  Nor do they determine the soldering $\Sigma:J_R\to J_L$.  Once such a locking map is supplied, skew-adjointness of the derivations fixes the reverse bridge by Hermitian conjugation, but it does not select $\Sigma$ itself.

\subsection{Masses and CKM mixing through the same finite Dirac operator}

The synthesis is exact as an operator identity, although it is a kinematic reorganization rather than a new derivation of CKM. Define
\begin{equation}
H_f=Y_fY_f^\dagger
=y_{f3}^2U_f\mathcal Q_f^2U_f^\dagger,
\qquad
P^L_{fi}=U_f|i\rangle\langle i|U_f^\dagger.
\label{eq:Hfprojectors}
\end{equation}
The eigenvalues of $H_f$ are the squared Yukawa singular values and hence encode the mass ratios; the projectors $P^L_{fi}$ encode the left-handed mass frame. Choosing normalized eigenvectors gives the CKM overlap matrix
\begin{equation}
(V_{\rm CKM})_{ij}=\langle u_i^L|d_j^L\rangle,
\qquad V_{\rm CKM}=U_u^\dagger U_d.
\label{eq:DFCKM}
\end{equation}
understood up to the usual independent diagonal rephasings of the up and down eigenvectors. The physical content is already expressible without choosing those phases:
\begin{equation}
|V_{ij}|^2=\operatorname{Tr}(P^L_{ui}P^L_{dj}),
\qquad
V_{ij}V_{kj}^*V_{kl}V_{il}^*
=\operatorname{Tr}(P^L_{ui}P^L_{dj}P^L_{uk}P^L_{dl}).
\label{eq:quartetprojectors}
\end{equation}
The imaginary parts of the second expression contain the rephasing-invariant $CP$ information. In this precise sense the mass-ratio and CKM constructions are the spectral and relative-frame parts of one finite Dirac problem. The statement is organizational, but nontrivial: it identifies the missing weak-to-mass bridge as the dynamical law that must select spectral projectors, rather than as another correction to the already-fixed eigenvalues. It does not imply that $\mathcal Q_f$ determines $U_f$.

\subsection{Relation to almost-commutative noncommutative geometry and novelty}
\label{sec:ncg}

The placement of masses and mixing in one finite Dirac operator has an important precedent. In the almost-commutative construction of Chamseddine, Connes and Marcolli, the admissible finite Dirac operators form a moduli space whose physical coordinates include quark and lepton masses, CKM and PMNS angles, and Majorana data \cite{CCM}. Thus neither the statement that masses and mixing reside in $D_F$, nor their separation into singular values and relative flavour frames, is claimed here as new. That construction also gives genuine aggregate restrictions, including gauge- and Higgs-sector boundary conditions and a unification-scale sum rule for squared Yukawa masses. It would therefore be incorrect to say that conventional noncommutative geometry makes no particle-physics predictions.

The distinction concerns the individual flavour data. In the standard model, the complex matrices $Y_u$, $Y_d$ and $Y_e$ are independent renormalizable couplings. Choosing them appropriately reproduces the observed masses and CKM matrix, but the standard-model field content and symmetries do not select their hierarchical singular values or relative left frame. Recasting the same data as coordinates of a finite geometry gives them a geometric home but, by itself, does not calculate their observed numerical values. Later Hodge-duality work shows that distinct nonzero masses and nontrivial mixing can be required for a refined property of the finite geometry, but does not derive the measured hierarchies or CKM entries \cite{DabrowskiSitarz}.

The new proposal is a partial selection principle on this finite-geometric moduli space:
\begin{equation}
X_f\ \longrightarrow\
P_{\rm occ}^{f}\Gamma_3(\widehat X_f)P_{\rm occ}^{f}\ \longrightarrow\
\mathcal Q_f\ \longrightarrow\
Y_f=y_{f3}U_f\mathcal Q_fW_f^\dagger.
\label{eq:ncgselection}
\end{equation}
Conditional on the parent Jordan spectra, chain assignments and occupied-node projectors, $\mathcal Q_f$ fixes two charged-fermion mass ratios per sector and the cross-sector relations of Refs.~\cite{Singh2508,TeliSinghMass}. Conditional also on the transport, virtual-node amplitude and quadrature postulates, the relative left frames give the Layer-1 values of $\Vus$ and $\Vcb$. The virtual node supplies a proposed structural origin for the up--down asymmetry of the $(2,3)$ mixing, while Lemma~\ref{lem:nogo} shows that a phase on that edge cannot repair $\Vub$ without spoiling $\Vcb$; the missing information is thereby localized in a complex long-edge bridge.

To our knowledge, this is the first exceptional-Jordan construction in which proposed charged-fermion mass-ratio moduli and partially constrained CKM frame data are organized as complementary spectral data of a finite Dirac operator. The priority claim is deliberately this narrow. Equation~\eqref{eq:ncgselection} does not yet give the full $D_F$: the chain and amplitude prescriptions remain programme inputs, $W_f$ is unknown, and the long-edge parameters are fitted. What it adds to conventional finite geometry is therefore a conditional calculation of part of the individual flavour data, not a completed exceptional spectral triple.

\begin{proposition}[Finite-Dirac underdetermination]
\label{prop:underdetermination}
The Jordan spectrum and occupied chains determine $\mathcal Q_f$, hence the two independent mass ratios in each charged sector. They do not determine $W_f$, and they do not derive the postulated left transport $U_f$. Consequently they select a family of finite Dirac operators, not a unique $D_F$. In particular, mass spectra alone cannot imply CKM mixing, texture zeros, a Yukawa-loop phase, or $\arg\det(Y_uY_d)$. \Dtag
\end{proposition}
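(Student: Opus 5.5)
The first claim follows directly from Proposition~\ref{prop:rootoperator}. Given the parent spectra and the occupied-node projectors $P_{\rm occ}^f$, the operator $\mathcal R_f$ is diagonal with the three occupied monomials on its diagonal. Normalizing $\mathcal R_f^\dagger\mathcal R_f$ then fixes $\operatorname{spec}\mathcal Q_f$, and hence the two independent ratios $m_{f1}/m_{f3}$ and $m_{f2}/m_{f3}$ in each sector.

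The rest of the proof is an orbit argument on the singular-value decomposition \eqref{eq:Ysvd}. For fixed $\mathcal Q_f$ and $y_{f3}$, consider the map
\begin{equation}
(U_u,W_u,U_d,W_d)\in U(3)^4\ \longmapsto\ \bigl(Y_u,Y_d\bigr)=\bigl(y_{u3}U_u\mathcal Q_uW_u^\dagger,\ y_{d3}U_d\mathcal Q_dW_d^\dagger\bigr).
\end{equation}
Every point in this family has the same spectrum \eqref{eq:Dspec} for each sectoral block $\mathcal D_F^{(f)}$. The family also has the same traces \eqref{eq:abrefine}, since the unitaries cancel under cyclicity. Consequently no function of the spectral data alone can distinguish different points of the family.

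Several consequences follow.
\begin{itemize}
\item \emph{$W_f$ is undetermined.} Replacing $W_f$ by $W_fR$ with $R\in U(3)$ leaves $H_f=Y_fY_f^\dagger$ unchanged, so the left frames, and hence $V_{\rm CKM}$, are unchanged. The operator $Y_f^\dagger Y_f$, however, changes unless $R$ commutes with $\mathcal Q_f$. The right frame is therefore free data.
\item \emph{$U_f$ is undetermined.} Replacing $U_d$ by $U_dR$ gives $V_{\rm CKM}\mapsto V_{\rm CKM}R$ at fixed spectrum. Taking $R=\mathbf 1$ gives the aligned point $U_u=U_d$, with $V=\mathbf 1$. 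Taking a generic $R$ gives any unitary $V$ with nonzero angles and a nonzero Jarlskog invariant. So the spectra are compatible both with zero mixing and with arbitrary mixing. In particular, the transport $U_f$ of Sec.~\ref{sec:transport} is an extra postulate, not a consequence of the spectra.
\item \emph{Texture zeros are not determined.} The basis-dependent matrix elements $(Y_f)_{ij}$ vary over the orbit. For example, at $U_f=W_f=\mathbf1$ the block $Y_f$ is diagonal, while a generic conjugation makes every entry nonzero. No pattern of zeros is therefore fixed by the spectrum.
\item \emph{The determinant phase is not determined.} We have $\det Y_f=y_{f3}^3\det\mathcal Q_f\,\det U_f\,\overline{\det W_f}$. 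Multiplying $W_d$ by $e^{i\alpha}\mathbf1$ therefore shifts $\arg\det(Y_uY_d)$ by $-3\alpha$ at fixed spectrum, so this phase is not fixed either.
\item \emph{Loop phases are not determined.} Any Yukawa-loop phase built from off-diagonal entries changes along the orbit in the same way.
\end{itemize}

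The main obstacle is not a computation but stating the claim correctly. The non-uniqueness must be shown for physical, rephasing-invariant quantities, not just for basis-dependent ones. For the CKM statement the plan is to exhibit explicit distinct points whose invariants $|V_{ij}|$ differ, using the trace formulas \eqref{eq:quartetprojectors}. For $\arg\det(Y_uY_d)$, I would take care to treat it under the paper's fixed weak basis, where common chiral rephasings are not yet quotiented, and then note that the shift produced by $W_d$ is not compensated by any change in the spectrum.
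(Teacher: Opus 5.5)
Your proposal is correct and follows essentially the paper's own argument. You fix $\mathcal Q_f$ by Proposition~\ref{prop:rootoperator}, then vary the frames in $Y_f=y_{f3}U_f\mathcal Q_fW_f^\dagger$ at fixed spectrum: $H_f$ is blind to $W_f$, the entries, right orientation and determinant phase move, and nothing in $\mathcal Q_f$ selects $U_f$. The only difference is that you make the $U_f$/CKM step explicit, where the paper states it in one line.

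Two small slips should be fixed. First, under $U_d\mapsto U_dR$ the choice $R=\mathbf 1$ changes nothing; set $U_d=U_uR$ instead, so that $V_{\rm CKM}=R$. Second, the overall phase $W_d\mapsto e^{i\alpha}W_d$ that shifts $\arg\det(Y_uY_d)$ leaves a rephasing-invariant loop phase such as $\Phi_Y$ unchanged. Its variation therefore has to be shown with a non-scalar change of $W_f$ or $U_f$, not ``in the same way''.
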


\begin{proof}
For every $W_f\in U(3)$, Eq.~\eqref{eq:Ysvd} gives
$Y_fY_f^\dagger=y_{f3}^2U_f\mathcal Q_f^2U_f^\dagger$, so the masses and left diagonalizer are unchanged, while the entries, right-handed orientation and determinant phase of $Y_f$ vary. Moreover, $\mathcal Q_f$ itself contains no information selecting $U_f$. The CKM matrix $U_u^\dagger U_d$ therefore requires the transport/bridge data independently of the mass modulus.
\end{proof}

One useful representative illustrates the force of this result. The parity-motivated Hermitian completion $W_f=U_f$ gives $Y_f=y_{f3}U_f\mathcal Q_fU_f^\dagger$. At the fitted Layer-2 point, all nine entries of each of $Y_u$ and $Y_d$ are nonzero: the pair has eighteen entries, not the nine of the minimal textures in \cite{AFHM}. For the candidate invariant used in Sec.~\ref{sec:afhm},
\begin{equation}
\Phi_Y=\arg\!\left(Y^u_{12}Y^{u*}_{22}Y^d_{22}Y^{d*}_{12}\right),
\label{eq:PhiY}
\end{equation}
this representative gives exactly $90^\circ$ before the direct long edge is inserted, but $100.03^\circ$ on the orientation-A representative $(\varepsilon,\omega)=(0.002079,288.2^\circ)$. Its determinants are positive, so $\arg\det(Y_uY_d)=0$ at tree level. None of these three properties is a prediction of the mass modulus: other $W_f$ change them, and the opposite balanced orientation gives a different fitted bridge. The calculation is instead a counterexample to any claim that the present spectrum and transport automatically produce a sparse nine-link ultraviolet texture. A first-principles weak-to-mass bridge must fix both chiral frames, or an equivalent full Yukawa kernel, before that correspondence can be tested.

\section{Two extensions of the lift, and the one that survives}
\label{sec:transport}

\subsection{Adjacency from minimality}

The minimality principle that selects the chains \cite{Singh2508} also constrains the mixing structure: every physical step is a move along one edge $SU(2)$ of $\Sym^3(\mathbf 3)$, so whatever operator connects generations acts \emph{only between adjacent chain nodes}. Two readings of ``the mass matrix is built from adjacent steps'' are then available:

\emph{(i) Hamiltonian-texture reading}: the mass matrix in the flavour basis is a tridiagonal (nearest-neighbour) texture on the chain, with eigenvalues matched to the physical masses; diagonalize it. \emph{(ii) Transport reading}: the flavour-to-mass misalignment is the ordered product of two-state rotations, one per rung, each given exactly by the lift \eqref{eq:lift}; mass eigenstates are the chain nodes themselves, as the ladder normalization dictates (each generation's square-root mass \emph{is} its node monomial).

\subsection{The Hamiltonian-texture extension fails}

Reading (i) can be implemented in closed form: for the up chain the $3\times3$ zero-diagonal-plus-endpoint texture reproduces the classic Fritzsch results, but for the down chain the eigenvalue-matched $4\times4$ textures (the fourth eigenvalue assigned to the virtual node, all sign patterns and several auxiliary-eigenvalue candidates scanned) produce non-unitarity of $20$--$69\%$ across the ten admissible sign assignments, with $\Vus$ ranging over $0.05$--$0.66$ and the assignment closest to the Cabibbo value carrying $31\%$ non-unitarity; no viable case exists among the tested sign patterns and auxiliary-eigenvalue choices (Appendix~\ref{app:failed}) \Dtag. The failure is structural: matching four hierarchical eigenvalues with zero diagonals forces large hops and $\mathcal O(1)$ eigenvector rotations. We record this prominently, with the caveat that it excludes this specific (Fritzsch-rigid) texture family, not reading (i) in general: nearest-neighbour textures with free diagonals can always be made diagonal-dominant and mild. The adoption of reading (ii) is thus a modelling choice supported by the exclusion and by the transport character of the lift itself \cite{Fritzsch,GST}, not a forced conclusion.

\subsection{The transport model}
\label{sec:transportmodel}

\begin{postulate}[Transport extension of the adjacent-edge lift]
\label{post:transport}
For each sector, the unitary relating the flavour basis to the mass basis is the ordered product of rung rotations along the minimal chain, applied anchor-outward (the rung adjacent to the first generation acts first),
\begin{equation}
U_f\;=\;R_{23}\big(\theta^{f}_{23}\big)\,R_{12}\big(\theta^{f}_{12}\big),
\qquad \tan\theta^{f}_{ij}\ \text{from \eqref{eq:rungs}},
\label{eq:Uf}
\end{equation}
where for a chain segment that traverses unoccupied nodes the rotation angle is defined by the composed rung rotations under the amplitude reading (Postulate~\ref{post:comp} below), and $V_{\rm CKM}=U_u^\dagger U_d$.
\end{postulate}

Three consequences are immediate. The $(1,2)$ blocks reproduce the exact two-state lift (GST relations) by construction. The $(1,3)$ rotations are absent: the long edge of each sector is \emph{automatically} the composed product of its adjacent transports, which appears in \eqref{eq:Uf} as the $(3,1)$ matrix element $s_{12}s_{23}$; no separate long-edge input exists at this layer. And the model is exactly unitary, with all angles in closed form.

\begin{postulate}[Virtual-node composition: the amplitude reading]
\label{post:comp}
Let the $2\to3$ segment of a chain traverse one unoccupied node, with rung angles $\theta_2$ (occupied$\,\to\,$virtual) and $\theta_3$ (virtual$\,\to\,$occupied). The effective physical $2$--$3$ mixing angle is \emph{defined} by the two-step projected amplitude of the composed transport,
\begin{equation}
\boxed{\ \sin\theta_{23}^{\rm eff}\;=\;\big\langle\,3\,\big|\,R_{v3}(\theta_3)R_{2v}(\theta_2)\,\big|\,2\,\big\rangle\;=\;\sin\theta_2\,\sin\theta_3\ .}
\label{eq:comp}
\end{equation}
\end{postulate}

\noindent We are explicit about the status of \eqref{eq:comp}, because an earlier draft of this work claimed it as a theorem and the claim does not hold. The composed transport, projected onto the physical (occupied-node) subspace, is a \emph{triangular} $2\times2$ block, not a rotation: one off-diagonal is $\sin\theta_2\sin\theta_3$, the other vanishes, and the block is not unitary (the virtual admixture removed by the projection carries $26\%$ of the second state's norm). Different re-unitarization conventions therefore assign different angles: the polar decomposition (nearest unitary) gives $\sin\theta_{\rm eff}=0.0675$, row normalization $0.1257$, column normalization $0.1435$, against the amplitude value $0.1232$; propagated to the observable these give $\Vcb=0.014$, $0.045$, $0.063$ and $0.042$ respectively. The amplitude reading \eqref{eq:comp} is selected here as a kinematic composition rule because it directly retains the two-step matrix element and recovers the adjacent-edge two-state limit smoothly.  It is not identified with a second-order Hamiltonian amplitude, for which energy denominators and microscopic dynamics would be required.  The selection is a postulate \Ptag, and a first-principles bridge computation must ultimately decide it. Where we write ``derived'' below, it means: derived given \eqref{eq:comp}.

\section{The phase sector: one rule, one no-go}
\label{sec:phases}

A word first on where phases can and cannot live, for the reader outside the texture literature. Rephasing the quark fields acts like a gauge transformation at the nodes of the flavour graph: any phase sitting on an open path can be rotated away, and only the phase around a closed loop is physical \cite{Jarlskog,BLM}. In the transport model the up and down chains share their first rung, and that shared edge is exactly what closes a loop between the two sectors. The postulate below says that all the physical phase content of the model lives on that loop, and the theorem of \cite{TeliSinghCP} fixes its value.

\begin{postulate}[Shared-edge phases; balanced rotor]
\label{post:phase}
Relative phases between the up and down transports enter only on the \emph{shared} Cabibbo edge (both chains' first rung is the same $E$-edge $a^2b\to abc$), where the conjugation theorem of \cite{TeliSinghCP} gives $A_d=A_u^*$ and hence $\varphi_{12}=-2\chi$; the balanced rotor fixes $|\varphi_{12}|=90^\circ$. The $(2,3)$ rungs of the two sectors lie on \emph{distinct} edges ($C$ versus $B$), the conjugation theorem does not apply, and the corresponding blocks are relatively real.
\end{postulate}

The first clause is the Letter's derived law plus the balanced-rotor choice \Ptag. The second clause, minimal on its face, is in fact \emph{forced} by data within this model:

\begin{lemma}[No-go for a relative $(2,3)$ phase]
\label{lem:nogo}
In the transport model, a relative phase $\psi$ between the down ($C$-edge) and up ($B$-edge) $(2,3)$ blocks interpolates
\begin{center}
\begin{tabular}{lcccc}
$\psi$ & $0^\circ$ & $30^\circ$ & $45^\circ$ & $90^\circ$\\
$\Vcb$ & $0.0422$ & $0.0665$ & $0.0870$ & $0.1467$\\
$\Vub$ & $0.0018$ & $0.0028$ & $0.0037$ & $0.0062$
\end{tabular}
\end{center}
and the pattern in the table is exact: with phases confined to the $(1,2)$ blocks and a relative $(2,3)$ phase $\psi$, the long-edge elements share the common factor $D(\psi)=s^d_{23}c^u_{23}-e^{i\psi}c^d_{23}s^u_{23}$, namely $V_{ub}=-s^u_{12}e^{-i\varphi}D(\psi)$ and $V_{cb}=c^u_{12}D(\psi)$, so the ratio $|V_{ub}/V_{cb}|=\tan\theta^u_{12}$ holds for \emph{every} $\psi$, while $|D(\psi)|$ is strictly increasing on $[0,\pi]$. Hence no $\psi$ can repair $\Vub$ without inflating $\Vcb$ by the same factor: the $(2,3)$ interference must remain destructive and real, and the $\Vub$ deficit must be repaired elsewhere. \Dtag
\end{lemma}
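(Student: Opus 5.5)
The plan is to compute $V_{\rm CKM}=U_u^\dagger U_d$ directly from Postulate~\ref{post:transport}, keeping the $(1,2)$ and $(2,3)$ blocks as explicit $3\times3$ rotations. I will put all phases on the shared Cabibbo edge and add the extra relative phase $\psi$ between the $(2,3)$ blocks. Concretely, write
\[
U_u=R_{23}(\theta^u_{23})\,P_\psi\,R_{12}(\theta^u_{12}),\qquad U_d=R_{23}(\theta^d_{23})\,\Phi\,R_{12}(\theta^d_{12}),
\]
where $\Phi$ and $P_\psi$ are diagonal phase matrices. Since only the relative phases matter, the Cabibbo-edge phase $\varphi$ can be attached to the first generation and $\psi$ to the third. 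Then
\[
V=R_{12}(\theta^u_{12})^T\,\mathrm{diag}(e^{-i\varphi},1,1)\,M\,\mathrm{diag}(1,1,1)\,R_{12}(\theta^d_{12}),
\]
with the middle factor
\[
M=R_{23}(\theta^u_{23})^T\,\mathrm{diag}(1,1,e^{i\psi})\,R_{23}(\theta^d_{23}).
\]
The placement of $e^{i\psi}$ is a rephasing convention. I will check that it reproduces the stated $D(\psi)$, possibly up to an overall unphysical phase.

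The key observation is that $M$ is block diagonal, $1\oplus(\text{$2\times2$})$. Its $(2,3)$ entry is
\[
M_{23}=c^u_{23}s^d_{23}-e^{i\psi}s^u_{23}c^d_{23}=D(\psi)
\]
(up to sign conventions). The third column of $V$ is $R_{12}(\theta^u_{12})^T\mathrm{diag}(e^{-i\varphi},1,1)$ applied to the third column of $M$, because $R_{12}(\theta^d_{12})$ leaves column 3 fixed. Column 3 of $M$ is $(0,D(\psi),M_{33})^T$. Multiplying by $R_{12}^{u\,T}$ then gives $V_{ub}=-s^u_{12}e^{-i\varphi}D(\psi)$ and $V_{cb}=c^u_{12}D(\psi)$, so the ratio $|V_{ub}/V_{cb}|=\tan\theta^u_{12}$ holds independently of $\psi$. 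By the up-sector rung in \eqref{eq:rungs}, this ratio is $\sqrt{m_u/m_c}$.

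Next comes monotonicity. We have
\[
|D(\psi)|^2=(s^d_{23}c^u_{23})^2+(c^d_{23}s^u_{23})^2-2s^d_{23}c^u_{23}c^d_{23}s^u_{23}\cos\psi .
\]
All angles lie in $(0,\pi/2)$ because the rung tangents are positive, so the coefficient of $\cos\psi$ is strictly negative. Hence $|D(\psi)|$ is strictly increasing on $[0,\pi]$ and is minimized at $\psi=0$, where the interference is destructive. The tabulated values then follow by substituting $\sin\theta^d_{23}=0.1232$ from Postulate~\ref{post:comp} and $\tan\theta^u_{23}=0.0814$. They provide a numerical check rather than part of the logic; for example, $\psi=0$ must return $\Vcb=0.0422$.

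The main obstacle is bookkeeping of conventions. I have to confirm that the ordering $R_{23}R_{12}$ together with the phase placement of Postulate~\ref{post:phase} really produces the zero in $M_{13}$ and the common factor $D$. I also have to check that moving $\psi$ to a different diagonal slot only rephases rows and columns and does not change moduli. I would settle this first by a short argument: any diagonal phase sandwiched between $R_{23}$ rotations is equivalent, up to outer rephasings, to a single phase on the third row. With that established, the conclusion follows. Since $\Vcb\propto|D(\psi)|$ and $\Vub\propto|D(\psi)|$ with a fixed ratio, any increase of $\Vub$ from $\psi$ raises $\Vcb$ by the same factor. Because $\Vcb$ is already accurate at $\psi=0$, the $\Vub$ deficit cannot be repaired through $\psi$.
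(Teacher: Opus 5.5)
Your core computation matches the paper's, which only asserts the factorization (see also the appendix proof of Lemma~\ref{lem:fritzsch}). Once $V$ contains $M=R_{23}(\theta^u_{23})^T\operatorname{diag}(1,1,e^{i\psi})R_{23}(\theta^d_{23})$, three facts finish the argument: $M$ is block-diagonal, $R_{12}(\theta^d_{12})$ fixes the third column, and $|D(\psi)|^2$ has a negative $\cos\psi$ coefficient. Together these give the common factor, the $\psi$-independent ratio and strict monotonicity. Substituting $s^d_{23}=0.1232$ and $\tan\theta^u_{23}=0.0814$ does reproduce the table.

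The gap is in how you reach $M$. In your stated $U_u=R_{23}(\theta^u_{23})P_\psi R_{12}(\theta^u_{12})$ with $P_\psi=\operatorname{diag}(1,1,e^{i\psi})$, the phase sits between $R_{23}$ and $R_{12}$. It commutes with $R_{12}(\theta^u_{12})$, so in $U_u^\dagger U_d$ it moves to the far left and merely rephases the $t$ row. Computed literally, your parametrization gives $|V_{cb}|=c^u_{12}|D(0)|$ for every $\psi$, contradicting the table. So the sentence ``the placement of $e^{i\psi}$ is a rephasing convention'' is false, and false exactly where the lemma has its content: whether the phase sits inside or outside the $(2,3)$ rotation is physical. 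The same happens to any diagonal phase placed between $R_{23}$ and $R_{12}$ in $U_d$. It splits into a Cabibbo-type phase and an outer column rephasing, and never produces a relative $(2,3)$ phase.

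The repair is to define the relative phase as a complex $(2,3)$ rotation, e.g.\ $R_{23}(\theta^d_{23},\psi)=P_\psi R_{23}(\theta^d_{23})P_\psi^\dagger$ in $U_d$. The trailing $P_\psi^\dagger$ commutes with $R_{12}(\theta^d_{12})$ and becomes an outer column rephasing. This leaves $V\simeq R_{12}(\theta^u_{12},\varphi)^\dagger R_{23}(\theta^u_{23})^T P_\psi R_{23}(\theta^d_{23}) R_{12}(\theta^d_{12})$, which contains exactly your $M$. Here $R_{12}(\theta,\varphi)=\Phi R_{12}(\theta)\Phi^\dagger$ with $\Phi=\operatorname{diag}(e^{-i\varphi},1,1)$.

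Two smaller bookkeeping points of the same kind:
\begin{itemize}
\item With $\operatorname{diag}(e^{-i\varphi},1,1)$ placed between $R_{12}(\theta^u_{12})^T$ and $M$, the phase acts on the zero entry of $Me_3$, so you would actually get $V_{ub}=-s^u_{12}D(\psi)$. The paper's factor $e^{-i\varphi}$ arises from the complex rotation $R_{12}(\theta^u_{12},\varphi)$, where it is a row rephasing.
\item In your ``single phase on the third row'' reduction, a phase on the first slot commutes with both $R_{23}$ factors and shifts $\varphi$ rather than being an outer rephasing.
\end{itemize}
Neither point affects the third-column moduli, which are $\varphi$-independent. With the complex-rotation definition in place, your argument is complete.
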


Lemma~\ref{lem:nogo} is the sharpened form of the companion paper's ``long-edge bridge problem'': the missing structure provably cannot sit in the $(2,3)$ phase sector.

\section{\texorpdfstring{$\kappa_{23}$}{kappa23} from the virtual node}
\label{sec:kappa}

Applying Postulate~\ref{post:comp} to the down chain with the rung tangents \eqref{eq:rungs}:
\begin{equation}
\sin\theta^{d}_{23}=\sin\theta_2\sin\theta_3=0.1232,
\qquad
\sqrt{m_s/m_b}=\tan\theta_2\tan\theta_3=0.1491,
\end{equation}
(the naive value is the telescoping product of tangents; the composition replaces tangents by sines, a sine-to-sine ratio of $0.835$). The up chain has no virtual node: $\sin\theta^{u}_{23}=\sin(\arctan 0.0814)=0.0812$, unsuppressed. In the difference form of \cite{TeliSinghMixing} (their Eq.~(49)),
\begin{equation}
\Vcb\;=\;\big|\,s^{d}_{23}c^{u}_{23}-c^{d}_{23}s^{u}_{23}\,\big|\;=\;0.0422,
\qquad
\kappa_{23}^{\rm eff}\equiv\frac{\Vcb}{\big[s^{d}_{23}c^{u}_{23}-c^{d}_{23}s^{u}_{23}\big]_{\rm naive}}=0.633 .
\label{eq:vcb}
\end{equation}
Compared with the 2026 global-fit $\Vcb=0.04188$: $+0.8\%$, with no fitted continuous CKM quantity after the stated structural choices \Dtag~(given Postulate~\ref{post:comp}). Remarks. (1)~The companion paper's fitted $\kappa_{23}\simeq0.56$ was defined with the \emph{fitted} mass ratios of \cite{TeliSinghMass}; with the theoretical ratios used throughout here, the value required by data is $0.62$--$0.63$, which the amplitude composition delivers. The composition does \emph{not} retrodict $0.56$ in the companion's own fitted frame (it gives $0.685$ there); the two statements live in different input frames, and only the all-theory frame has no fitted continuous CKM parameter after adopting the stated transport and amplitude postulates. (2)~The deliverable was pass/fail: the verdict is a \emph{conditional pass}: pass under the amplitude reading, with the alternative readings and their spread disclosed in Sec.~\ref{sec:transportmodel}. (3)~The same mechanism predicts the \emph{absence} of suppression in the up sector, and, by the Dynkin reflection, a suppression of the same type on the charged-lepton chain (Sec.~\ref{sec:ledger}).

\paragraph{Fragility disclosure.}
The difference form \eqref{eq:vcb} is a small difference of larger terms, with elasticity $\partial\ln\Vcb/\partial\ln t^{u}_{23}=-1.9$: the result is hostage to the parent theory's $\sqrt{m_c/m_t}$. That input is itself the parent's worst entry (theory $1/12.28$ against the measured $1/16.65$, a $26\%$ deviation \cite{TeliSinghMass}), and the three input frames give
\begin{center}
\begin{tabular}{lccc}
\toprule
frame for $\sqrt{m_c/m_t}$ & theory ($0.0814$) & companion fit ($0.0722$) & experiment ($0.0601$)\\
$\Vcb$ (amplitude reading) & $0.0422$ & $0.0515$ & $0.0635$\\
\bottomrule
\end{tabular}
\end{center}
The $+0.8\%$ agreement therefore holds in, and only in, the internally consistent all-theory frame, where the parent's $m_t/m_c$ overshoot and the virtual-node suppression partially compensate. We state this plainly rather than let the headline number stand unqualified: the correlation between the $\Vcb$ prediction and the parent's $\sqrt{m_t/m_c}$ tension is itself a testable internal relation (Sec.~\ref{sec:ledger}), exactly parallel to the $\Vus$--$\sqrt{m_s/m_d}$ correlation noted below.

\section{The conditional no-fit layer}
\label{sec:assembly}

Assembling $V=U_u^\dagger U_d$ with Postulates~\ref{post:transport}--\ref{post:phase} and nothing else gives the following conditional no-fit layer.  Here ``no-fit'' means no fitted \emph{continuous} CKM parameter after adopting the parent Jordan spread and node assignments, the occupied-node transport reading, the virtual-node amplitude rule, relatively real $(2,3)$ blocks, and one of the two balanced shared-edge orientations.  These are substantive structural choices, not five interchangeable binary parameters.

\begin{table}[t]
\centering
\small
\begin{tabular}{lccc>{\raggedright\arraybackslash}p{0.20\textwidth}}
\toprule
observable & model & PDG global fit & deviation & status \\
\midrule
$\Vus$ & $0.2371$ & $0.22517$ & $+5.3\%$ & independent angle estimate \\
$\Vcb$ & $0.0422$ & $0.04188$ & $+0.8\%$ & independent angle estimate \\
\midrule
$\Vts$ & $0.0411$ & $0.04115$ & $-0.2\%$ & unitarity consequence \\
$\Vtd$ & $0.0099$ & $0.00863$ & $+14.5\%$ & unitarity consequence \\
$A$ (Wolfenstein) & $0.750$ & $0.826$ & $-9.1\%$ & $\equiv\Vcb/\Vus^2$ \\
$\Vub$ & $0.0018$ & $0.003762$ & $-52.4\%$ & Fritzsch-ratio consequence \\
$J$ & $1.72\times10^{-5}$ & $3.16\times10^{-5}$ & $-45.6\%$ & unitarity/phase consequence \\
\bottomrule
\end{tabular}
\caption{Layer 1: two conditional angle estimates with no fitted continuous CKM parameters after the listed structural choices.  The remaining rows are consequences of the adopted unitary construction; $A$ is definitional, not an independent prediction.  Everything involving the long edge inherits the Fritzsch deficit.}
\label{tab:layer1}
\end{table}

\begin{lemma}[Fritzsch ratio]
\label{lem:fritzsch}
In the minimal transport model the long-edge elements obey, exactly,
\begin{equation}
\frac{\Vub}{\Vcb}\;=\;\tan\theta^{u}_{12}\;=\;\sqrt{\tfrac{m_u}{m_c}}\;=\;0.04245 ,
\end{equation}
against the 2026 global-fit ratio $0.090$: the classic square-root-texture shortfall, here not an accident of an ansatz but an exact identity of the minimal model ($V_{ub}=-s^{u}_{12}e^{-i\varphi}D$ and $V_{cb}=c^{u}_{12}D$ for a common factor $D$, so the ratio is exactly the tangent). \Dtag
\end{lemma}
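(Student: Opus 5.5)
We compute the third column of $V=U_u^\dagger U_d$ directly from Postulate~\ref{post:transport}, with the phase placement fixed by Postulate~\ref{post:phase}. Write $U_f=R_{23}(\theta^f_{23})R_{12}(\theta^f_{12})$, and let the shared-edge phase enter the up $(1,2)$ block as a diagonal rephasing, $R_{12}(\theta^u_{12})\to \Phi R_{12}(\theta^u_{12})$ with $\Phi=\operatorname{diag}(e^{i\varphi},1,1)$ or an equivalent convention. Then
\[
V=R_{12}(\theta^u_{12})^{T}\Phi^\dagger\,R_{23}(\theta^u_{23})^{T}R_{23}(\theta^d_{23})\,R_{12}(\theta^d_{12}).
\]
Since the $(2,3)$ blocks are real by Postulate~\ref{post:phase}, the middle factor is itself a single real $(2,3)$ rotation. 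Its angle is $\theta^d_{23}-\theta^u_{23}$, with $\theta^d_{23}$ given by Postulate~\ref{post:comp}. Call its off-diagonal entry $D=s^d_{23}c^u_{23}-c^d_{23}s^u_{23}$.

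The key observation is that right multiplication by $R_{12}(\theta^d_{12})$ leaves the third column unchanged. The third column of $V$ is therefore $R_{12}(\theta^u_{12})^T\Phi^\dagger$ applied to the third column of the middle rotation, which is $(0,D,\cos(\theta^d_{23}-\theta^u_{23}))^T$ up to sign conventions. The phase matrix $\Phi^\dagger$ acts on the zero first entry or on the untouched third entry, depending on convention. Applying the transposed $(1,2)$ rotation then gives $V_{ub}=-s^u_{12}e^{-i\varphi}D$ and $V_{cb}=c^u_{12}D$, exactly as quoted in Lemma~\ref{lem:nogo} with $\psi=0$. Hence
\[
\frac{\Vub}{\Vcb}=\frac{s^u_{12}}{c^u_{12}}=\tan\theta^u_{12}.
\]
This ratio is independent of $\varphi$, of $D$, of the virtual-node rule, and of the down-sector $(1,2)$ angle. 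The lift~\eqref{eq:lift} gives $\tan\theta^u_{12}=\sqrt{m_u/m_c}$, which is the first up rung tangent in~\eqref{eq:rungs}.

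It then remains to evaluate $(2/3-\delta)/(2/3+\delta)$ at $\delta=\sqrt{3/8}$, giving $0.04245$, and to quote the global-fit ratio $0.003762/0.04188\approx0.090$ for comparison. One also checks that $D\neq0$, which holds since $\Vcb=0.0422$, so the ratio is well defined.

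The main obstacle is bookkeeping rather than mathematics. The placement of the phase and the ordering and sign conventions of $R_{ij}$ must not let $e^{i\varphi}$, or any $(1,2)$ down data, leak into the third column. The cleanest route is to note that every factor except $R_{12}(\theta^u_{12})^T\Phi^\dagger$ either acts trivially on the third column or lies in a real $(2,3)$ block. Any admissible convention then changes only overall signs and phases of $V_{ub}$ and $V_{cb}$, not their moduli. The same computation with a relative phase $\psi$ inserted in the middle block reproduces the common factor $D(\psi)$ of Lemma~\ref{lem:nogo}, which serves as a consistency check.
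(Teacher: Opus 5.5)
Your proposal is correct and follows the paper's own route: a direct computation of the third column of $V=U_u^\dagger U_d$. Appendix~C only records $V_{ub}=-s^u_{12}e^{-i\varphi}D$ and $V_{cb}=c^u_{12}D$, while you supply the reason: $R_{12}(\theta^d_{12})$ fixes the third column, and the two relatively real $(2,3)$ rotations compose into one. A minor nit is that with your literal placement $\Phi R_{12}$ the phase never reaches $V_{ub}$; the paper's $e^{-i\varphi}$ comes from a conjugated complex rotation of the form $\Phi R_{12}\Phi^\dagger$, but as you say this changes only rephasings, not $\Vub/\Vcb$.
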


The $\Vus$ value deserves its own comment. $0.2371$ versus $0.22517$ is a $+5.3\%$ deviation, far outside experimental error, and comparable to the mass-ratio tensions of the parent construction (e.g.\ $\sqrt{m_s/m_d}$ at $-4.5\%$ against the companion's $M_Z$ baseline \cite{TeliSinghMass}). Since $\Vus$ here is built \emph{from} those mass ratios, the two deviations are correlated, and the proto-$\sqrt m$ matching systematic flagged in \cite{Singh2508,TeliSinghMass} feeds directly into the Cabibbo angle. We therefore quote $+5.3\%$ as a real, owned tension of the conditional no-fit layer, not as a success. Matching $\Vus$ exactly with the theoretical ratios would require $\varphi_{12}\simeq107^\circ$ (the theory-frame analogue of the companion's fitted $105.7^\circ$ \cite{TeliSinghMixing}); the balanced value $90^\circ$ is retained here because it is the distinguished no-fit point, and $\Vus$ is, notably, the \emph{only} observable in the construction that is sensitive to this choice (Sec.~\ref{sec:bridge}).

\section{The long edge: one complex number, eight observables}
\label{sec:bridge}

Lemma~\ref{lem:nogo} confines the missing structure to the $(1,3)$ sector, and the transport model says what is missing: a \emph{direct} long-edge rotation, over and above the composed adjacent transport; this is precisely the object the companion paper called the long-edge bridge problem. We therefore add the minimal such term, an up-sector rotation $R_{13}(\varepsilon,\omega)$ inserted between the rungs,
\begin{equation}
U_u\;=\;R_{23}\big(\theta^{u}_{23}\big)\,R_{13}(\varepsilon,\omega)\,R_{12}\big(\theta^{u}_{12}\big).
\end{equation}
The parameter counting must be stated before the numbers: a unitary $3\times3$ mixing matrix has exactly four physical parameters, of which Layer 1 conditionally fixes two ($\theta_{12}$, $\theta_{23}$) without fitted continuous CKM parameters after the structural choices listed above.  The insertion supplies the remaining two, $\varepsilon\leftrightarrow\theta_{13}$ and $\omega\leftrightarrow\delta_{CP}$, and both are \emph{fitted}.  Table~\ref{tab:layer2} displays orientation A, obtained by solving exactly on $\Vub$ and $\delta_{CP}$: $\varepsilon=0.00207936$ and $\omega=288.1915^\circ$. Pairing $\omega$ with any other CP-sector quantity gives the same physics; within a fixed convention, $\omega$ is the dial for the CKM phase. Every remaining CKM quantity is then a function of the four parameters so fixed:

\begin{table}[t]
\centering
\small
\begin{tabular}{lcccl}
\toprule
observable & model & PDG global fit & deviation & status \\
\midrule
$\Vus$ & $0.2371$ & $0.22517$ & $+5.3\%$ & Layer 1 ($\theta_{12}$) \\
$\Vcb$ & $0.0421$ & $0.04188$ & $+0.6\%$ & Layer 1 ($\theta_{23}$) \\
$\Vub$ & $0.003762$ & $0.003762$ & --- & fit target ($\varepsilon$) \\
$\delta_{CP}$ & $66.1^\circ$ & $66.1^\circ$ & --- & fit target ($\omega$) \\
\midrule
$\Vtd$ & $0.00914$ & $0.00863$ & $+6.0\%$ & unitarity consequence \\
$\Vts$ & $0.0413$ & $0.04115$ & $+0.3\%$ & unitarity consequence \\
$J$ & $3.34\times10^{-5}$ & $3.16\times10^{-5}$ & $+5.6\%$ & unitarity consequence \\
$\beta$ & $21.5^\circ$ & $22.9^\circ$ & $-6.2\%$ & unitarity consequence \\
$\gamma$ & $66.1^\circ$ & $66.1^\circ$ & $+0.0\%$ & $\simeq\delta_{CP}$ (to $0.03^\circ$) \\
$\alpha$ & $92.4^\circ$ & $91.0^\circ$ fit / ${\sim}85^\circ$ dir. & $+1.6\%$/$+9\%$ & $180^\circ{-}\beta{-}\gamma$ \\
$\bar\rho$ & $0.148$ & $0.1576$ & $-5.9\%$ & function of $(R_b,\gamma)$ \\
$\bar\eta$ & $0.335$ & $0.3556$ & $-5.9\%$ & function of $(R_b,\gamma)$ \\
$A$ & $0.749$ & $0.826$ & $-9.3\%$ & $\equiv\Vcb/\Vus^2$ \\
\bottomrule
\end{tabular}
\caption{The four CKM degrees of freedom (upper block: two conditional no-fit estimates, two fitted quantities) and the derived quantities that follow from them by unitarity (lower block). The lower-block residuals are the propagation of the Layer-1 moduli deviations: consistency checks, \emph{not} independent predictions; e.g.\ the $A$ deviation is exactly $(1+\delta_{\Vcb})/(1+\delta_{\Vus})^2-1$.}
\label{tab:layer2}
\end{table}

We do not claim the lower block of Table~\ref{tab:layer2} as evidence: once the four degrees of freedom are fixed, unitarity guarantees that every derived quantity agrees with a unitarity-based global fit to the extent that the four inputs do. In particular, the CP sector \emph{cannot} discriminate the balanced quadrature phase from nearby alternatives: refitting $(\varepsilon,\omega)$ at $\varphi_{12}$ anywhere in a broad band reproduces the CP set equally well, because $\omega$ dials $\delta_{CP}$ over its full range. The one observable in the entire construction that is sensitive to $\varphi_{12}$ is $\Vus$ itself, and it prefers $\varphi_{12}\simeq107^\circ$ over the balanced $90^\circ$ at the $+5.3\%$ level noted above. The honest summary of Layer 2 is therefore modest: the two remaining degrees of freedom are fitted, both belong to one complex matrix element of the weak-to-mass bridge, and the derived block confirms internal coherence at the few-percent level set by Layer 1.

There is also a discrete ambiguity that a single-orientation solve hides.  Reversing the balanced local rotor and repeating the full-circle fit gives a second physical solution with the same positive target $\delta_{CP}$:
\begin{center}
\begin{tabular}{lcccc}
\toprule
balanced orientation & $\varepsilon$ & $\omega$ & $J$ & $\varepsilon/(s^u_{12}s^u_{23})$ \\
\midrule
A (used in Table~\ref{tab:layer2}) & $0.00207936$ & $288.1915^\circ$ & $3.335\times10^{-5}$ & $0.604$ \\
B (opposite orientation) & $0.00535850$ & $202.1844^\circ$ & $3.358\times10^{-5}$ & $1.557$ \\
\bottomrule
\end{tabular}
\end{center}
Both branches reproduce $\Vub=0.0037623$ and $\delta_{CP}=66.13^\circ$ to numerical precision.  The local algebra distinguishes the two orientations, but the present vacuum dynamics does not select one.  Orientation A is retained as the displayed representative because it continues the convention used in the preceding analysis, not because the data or the bridge dynamics uniquely prefer it.

The branch comparison removes two apparent numerical clues.  On orientation A,
\begin{equation}
\frac{\varepsilon}{s^{u}_{12}s^{u}_{23}}\;=\;0.604 ,
\end{equation}
but orientation B gives $1.557$.  Thus the proximity of $0.604$ to $3/5$ or to other programme constants is branch-dependent numerology, not a robust target.  Moreover, the numerical value of the internal phase $\omega$ depends on the rotation ordering, quark rephasings and the convention used to distribute the Cabibbo phase.  Only rephasing-invariant CKM observables are physical; no Fano-geometric meaning is assigned here to the raw $\omega$ offset.

\section{The phase sector in the light of the nine-link texture scan}
\label{sec:afhm}

While this work was between its Zenodo release and the present arXiv posting,\footnote{The construction and its conditional no-fit layer are unchanged from the Zenodo release of 4 July 2026 (doi:\href{https://doi.org/10.5281/zenodo.21196380}{10.5281/zenodo.21196380}), which predates \cite{AFHM}. Relative to that release, the present version adds the finite-Dirac operator audit of Sec.~\ref{sec:finite} and the resulting explicit test of the proposed nine-link dictionary, corrects a sign error in the Appendix~\ref{app:failed} scan (the exclusion stands, with revised ranges), replaces the coarse-grid Layer-2 fit by exact solves on $(\Vub,\delta_{CP})$ against an internally coherent PDG 2026 target set, discloses the two balanced-orientation branches, and upgrades the no-go lemma to its exact form.} Arkani-Hamed, Figueiredo, Hall and Manzari posted a bottom-up analysis \cite{AFHM} that bears directly on the phase sector of this paper. They parametrize the ten quark flavour observables by nine-link textures (full-rank $Y_{u,d}$ with nine nonzero entries in total and a single rephasing-invariant phase $\varphi$), scan the full space of such textures, and report that the fitted $\varphi$ does not distribute featurelessly but clusters tightly at multiples of $\pi/8$, tracking the empirical observation that the unitarity-triangle angles lie within current errors of $(\pi/2,\pi/8,3\pi/8)$ (two independent coincidences rather than three, since $\alpha+\beta+\gamma=\pi$); the histogram also shows a secondary peak at $\varphi\sim\pi/4$, which they attribute to a coincidence at the level of the quark masses, not directly connected to the unitarity triangle. Each texture class predicts its \emph{special} angle with quoted spreads as small as hundredths of a degree (the other two angles are less sharply fixed), decisively testable as Belle~II and LHCb shrink the errors by an order of magnitude; and for all but five of their ninety-nine fixed-phase textures the sparseness makes $\det(Y_uY_d)$ automatically real, so that with spontaneous $CP$ violation $\bar\theta$ vanishes at tree level with no axion. The observation that the unitarity triangle may be exactly right-angled has its own history \cite{Xing2009,Antusch2013,Mimura,HarrisonScott}, which the scan of \cite{AFHM} systematises.

Three points of contact, each with its caveat:

\begin{enumerate}
\item \textbf{Quantization found versus the operator audit.} Where their scan finds phase quantization empirically, the conjugation theorem of \cite{TeliSinghCP} permits one relative phase on the shared edge, and the balanced rotor sets its transport value to $|\varphi_{12}|=90^\circ=\pi/2$, one of the cluster values of \cite{AFHM}. The underlying $|\chi|=\pi/4$ also coincides numerically with their secondary $\pi/4$ peak, though the origins differ. Section~\ref{sec:albertbridge} now advances the dictionary beyond a spectral resemblance: ${\cal D}_{12}(-g_\chi)$ acts on the adopted Peirce pair and gives an explicit off-diagonal exceptional generator carrying the conjugate quark phases. What it still does not select is the physical family embedding or the chiral contraction that turns this generator into one particular matrix element of $Y_u$ or $Y_d$. The right frames $W_f$ remain free. In the explicit Hermitian representative $W_f=U_f$, their candidate loop $\arg(Y^{u}_{12}Y^{u*}_{22}Y^d_{22}Y^{d*}_{12})$ is indeed $90^\circ$ in the adjacent-only model, but the fitted direct long edge moves it to $100.03^\circ$; moreover, both Yukawa matrices are dense, with eighteen rather than nine nonzero entries in total. The correspondence is therefore not yet implied. A bridge kernel that selects the chiral projection, $W_f$, sparsity and the direct edge could place the result in one of their classes, but that remains a dynamical calculation \Otag. Texture zeros and loop phases are rephasing-invariant but not weak-basis-invariant; the Peirce frame is now a candidate ultraviolet basis, but it becomes physical only after the vacuum locks it to the chiral modules.
\item \textbf{Precision.} Their islands in $(\alpha,\beta,\gamma)$ are, in each class's special angle, hundredths-of-a-degree objects; the propagated angles of Table~\ref{tab:layer2} carry the few-percent residuals of the Layer-1 moduli and are consistency checks, not predictions (Sec.~\ref{sec:bridge}). The present construction does not currently play in that game, and we say so. What the comparison sharpens is the specification of the bridge computation: if the coming measurements lock the angles onto the special fractions with texture-calculable deviations, a first-principles $(\varepsilon,\omega)$ must reproduce exactly that structure, not merely the vicinity of the measured angles. A concrete instance of the future discrimination: their nine $\varphi=\pi/2$ classes predict $\alpha$ between $88.74^\circ$ and $90.03^\circ$, while Table~\ref{tab:layer2}'s $\alpha=92.4^\circ$ is a unitarity propagation of fitted and conditional no-fit moduli, not an independent prediction, so there is no present contradiction; a completed bridge computation would convert this into a sharp either/or: the reconstructed textures fall in a nine-link $\pi/2$ class or they do not.
\item \textbf{Strong $CP$.} Their sparsity mechanism ($\det(Y_uY_d)$ real by a single perfect matching) is independent of, and complementary to, the spacetime left--right-exchange protection of \cite{SinghLR}, in which one vector-like colour group is blind to the breaking and $\bar\theta=0$ at tree level, likewise with no axion. The Hermitian representative of Sec.~\ref{sec:finite} also has $\arg\det(Y_uY_d)=0$, but this follows from the added choice $W_f=U_f$, not from the Jordan modulus; general $W_f$ change the determinant phase. The two genuine mechanisms therefore retain the same open question: whether the completed bridge fixes the chiral frames in a protected class and whether that protection survives radiative corrections.
\end{enumerate}

The exact lift gives one additional graph-theoretic insight.  Before symmetry-breaking mixings are inserted, three positive singular values in each quark sector may be represented by six diagonal links joining $(Q_i,u_i^c)$ and $(Q_i,d_i^c)$.  These form three disconnected components on the nine chiral-family vertices.  If the physical finite Dirac block is \emph{linear} in three directed Peirce-changing contractions, one for each of the $12$, $23$ and $31$ Albert edges, those three additional links connect the components and give
\begin{equation}
 E=6+3=9,\qquad V=9,\qquad C=1,\qquad E-V+C=1.
 \label{eq:ninelinkcount}
\end{equation}
Thus a physical Yukawa block linear in the three projected cyclic Albert links would have the support count and unique-cycle topology of a nine-link texture. If the two non-Cabibbo contractions are real and the unique cycle contains the balanced Cabibbo link, Eq.~\eqref{eq:Albertphaselaw} would put its holonomy at $\pi/2$. This is a candidate ultraviolet route to the $\varphi=\pi/2$ classes of Ref.~\cite{AFHM}, and is stronger than the previous singular-value compatibility alone, but it remains conditional on the linear contraction and vacuum-selected chiral endpoints.

Two qualifications prevent promotion to a theorem.  First, a sparse generator is not a sparse finite rotation: $e^{\cal B}$ generically fills matrix entries.  Equation~\eqref{eq:ninelinkcount} applies only if the Yukawa block is linear in the projected bridge field (or obeys an equivalent nilpotent/selection rule), not if it is reconstructed as $e^{\cal B}{\cal Q}$.  Second, the cyclic companions in Eq.~\eqref{eq:cyclicdirections} have not been shown to be the physical coloured $23$ and $31$ contractions, and the right-sector locking remains open.  The lift therefore supplies a candidate nine-link \emph{skeleton}; the microscopic finite Dirac map must still decide whether that skeleton, rather than its dense exponential, is realized.

We performed an ancillary reduced-protocol central-value test of whether the exceptional-Jordan singular values obstruct nine-link Yukawa supports.  Exact enumeration gives 6552 labelled connected full-rank unicyclic supports and 36 orbits under $S_3^Q\times S_3^u\times S_3^d$.  At fixed loop phase we impose the four mass ratios and the three $CP$-even moduli $(\Vus,\Vcb,\Vub)$, with no low-energy $CP$ observable in the fit; the floating-phase scan additionally imposes the experimental $\gamma$, making its phase distribution conditional on the observed CKM triangle.  As a local convention check, the Fig.~2 texture of Ref.~\cite{AFHM} at $\varphi=\pi/2$ gives $\alpha=88.950^\circ$, against their $88.957^\circ$; one benchmark does not validate an exhaustive classification.  Replacing the measured $M_Z$ ratios by the Jordan ratios leaves regular exact solutions in the same 28 support orbits.  At the chosen central inputs, 53 matched deduplicated support--phase representatives have a mean absolute loop-phase displacement of $0.149^\circ$ (maximum $0.457^\circ$), and fixed-phase special-angle changes below $0.425^\circ$.  These are central-input sensitivities, not uncertainty-qualified physical precisions.  The test establishes compatibility, but it neither reproduces the full uncertainty-weighted classification of Ref.~\cite{AFHM} nor tests the exceptional-Jordan CKM transport, because experimental CKM data were imposed.  It does not derive nine-link sparsity, the preferred weak basis, right-handed frames or the loop phase.  The included raw outputs and self-contained summarizer reproduce this result.

The net effect of \cite{AFHM} is therefore threefold. It supplies external, bottom-up evidence that a Yukawa-loop phase may sit at a simple fraction of $\pi$, the kind of rigidity sought here; the Albert lift supplies a natural three-edge, one-cycle exceptional skeleton; and the finite-Dirac audit prevents us from identifying its local transport phase with a Yukawa-loop phase until the chiral projection and right-sector lock are derived.

\section{Ledger, lepton mirror, and future tests}
\label{sec:ledger}

\subsection{Assumptions ledger}

\begin{center}
\small
\begin{tabular}{>{\raggedright\arraybackslash}p{0.42\textwidth}>{\raggedright\arraybackslash}p{0.16\textwidth}>{\raggedright\arraybackslash}p{0.34\textwidth}}
\toprule
ingredient & status & source / target \\
\midrule
$\delta^2=3/8$; chains; monomial masses & \Ptag\ (parent) & \cite{Singh2508,TeliSinghMass} \\
two Albert copies and two spacetime leaves & \Ptag\ (parent structures) & \cite{Singh2508,Scaffolding,WesleySinghIsidroBF,SinghEmergence} \\
quadratic $J_2(\mathbb H_s)$ norm, cubic Albert norm and corner factorization & algebraic identities \Dtag; physical descent \Ptag & Sec.~\ref{sec:quadraticcubic} \\
$J_2(\mathbb O_s)=J_2(\mathbb H_s)\oplus\mathbb H_s\ell$ and the $B\psi\chi$ channel & representation identity \Dtag; physical localization \Otag & Sec.~\ref{sec:universalsoldering} \\
factorized bridge $Y_f=B\otimes\Xi_f$ & conditional ansatz \Ptag; cancellation \Dtag & Eq.~\eqref{eq:CKMXifactor}; $\Xi_f$ remains \Otag \\
branch pairing $(\Sigma_L,J_L)\leftrightarrow$ flavour, $(\Sigma_R,J_R)\leftrightarrow$ mass & \Ptag; gluing kernel \Otag & Sec.~\ref{sec:leavesDF} \\
$\mathcal R_f$ and $\mathcal Q_f=\mathcal R_f^\dagger\mathcal R_f/\rho_{f3}^2$ & \Dtag\ given parent spectrum & Prop.~\ref{prop:rootoperator} \\
finite $D_F$ family; right frames $W_f$ & modulus derived, frames \Otag & Prop.~\ref{prop:underdetermination} \\
Cabibbo Albert lift ${\cal D}_{12}(-g_\chi)$ & exact generator \Dtag; family embedding \Ptag & Sec.~\ref{sec:albertbridge} \\
cyclic $23,31$ Albert companions & support derived; coloured soldering \Otag & Eqs.~\eqref{eq:cyclicdirections} and \eqref{eq:MajoranaProjection} \\
minimal bifundamental vacuum & alignment or flatness \Dtag & App.~\ref{app:vacuum}; conditional on $U(3)^3$ field content \\
three-edge Albert potential & truncation result; sourced alignment in $W$ \Dtag & App.~\ref{app:edgepotential}; $W$ and source components \Otag \\
adjacent-edge lift $\tan\theta=\sqrt{m_i/m_j}$ & \Ptag, partially \Dtag & two-state texture exact; adjacency from minimality \\
transport reading (vs Hamiltonian texture) & modelling choice, supported by one exclusion & App.~\ref{app:failed} \\
amplitude reading of the virtual-node composition & \Ptag & Postulate~\ref{post:comp}; alternatives disclosed \\
$\kappa_{23}$ and $\Vcb$ $+0.8\%$ & \Dtag~(given Postulate~\ref{post:comp}) & Sec.~\ref{sec:kappa}, incl.\ fragility table \\
shared-edge phase law $\varphi_{12}=-2\chi$ & \Dtag\ (Letter) & \cite{TeliSinghCP} \\
balanced rotor orientation; sign of Layer-1 $J$ & \Ptag\ + discrete choice & both orientations disclosed in Sec.~\ref{sec:bridge} \\
real $(2,3)$ blocks & excluded-by-data alternative & Lemma~\ref{lem:nogo} (uses PDG $\Vcb$, $\Vub$) \\
long-edge bridge $(\varepsilon,\omega)$ = $(\theta_{13},\delta_{CP})$ & fitted (2 continuous numbers) & two discrete orientation branches; raw $\omega$ is convention-dependent \\
\bottomrule
\end{tabular}
\end{center}

\subsection{The lepton mirror}

The charged-lepton chain is the Dynkin reflection of the down chain \cite{Singh2508} and shares its four-node structure (virtual $ab^2$). The transport machinery therefore applies verbatim: the charged-lepton $(1,2)$ misalignment is $\arctan\sqrt{m_e/m_\mu}\simeq4.1^\circ$, and the $(2,3)$ misalignment carries the same virtual-node suppression as the down sector. In the PMNS matrix these are the small charged-sector corrections; the large angles remain condensate-frame quantities on the neutrino side, undetermined at this layer, exactly as established in the neutrino-sector paper \cite{SinghNu}.

The Albert lift adds a structural statement about that neutral frame.  Adopt the cyclic Peirce placement
\begin{equation}
 \nu_1=F_3(ie_7),\qquad
 \nu_2=F_1(ie_5),\qquad
 \nu_3=F_2(ie_2),
 \label{eq:MajoranaPeirce}
\end{equation}
and form one cyclic bridge generator
\begin{equation}
 {\cal B}=\rho_{12}{\cal D}_{12}(z_{12})
 +\rho_{23}{\cal D}_{23}(z_{23})
 +\rho_{31}{\cal D}_{31}(z_{31}).
 \label{eq:cyclicbridge}
\end{equation}
Direct projection onto the real span of Eq.~\eqref{eq:MajoranaPeirce} gives, up to simultaneous edge-orientation conventions,
\begin{equation}
 \boxed{
 P_\nu{\cal B}P_\nu
 =\cos\chi
 \begin{pmatrix}
 0&-\rho_{31}&\rho_{23}\\
 \rho_{31}&0&-\rho_{12}\\
 -\rho_{23}&\rho_{12}&0
 \end{pmatrix}.}
 \label{eq:MajoranaProjection}
\end{equation}
In the convention of Ref.~\cite{SinghNu}, this is precisely the real $(e_4,e_3,e_6)$ support,
\begin{equation}
 b_4=-\rho_{31}\cos\chi,\qquad
 b_3=\rho_{12}\cos\chi,\qquad
 b_6=-\rho_{23}\cos\chi.
 \label{eq:MajoranaSupport}
\end{equation}
The $e_1$ quadrature that carries the local quark phase projects to zero.  Therefore the minimal common bridge is compatible with a real PMNS representative and
\begin{equation}
 J_\ell=0,\qquad \delta^\ell_{CP}\in\{0,\pi\},
 \label{eq:leptonCPsafe}
\end{equation}
without reintroducing the withdrawn maximal-leptonic-$CP$ claim.  This conclusion is conditional on the Majorana placement and on the bridge remaining in the projected real-linear class; an intrinsically complex bridge or identity--flavour mixing evades it \cite{TeliSinghCP,SinghNu}.

Equation~\eqref{eq:MajoranaProjection} fixes support but not magnitudes.  The three real numbers $\rho_{12},\rho_{23},\rho_{31}$ are independent Peirce-edge strengths until a vacuum equation determines them.  In particular, neither the corner amplitude $\lambda_0$ nor the order-three automorphism fixes their unequal ratios.  A symmetric Majorana mass matrix may consistently be written as $M_\nu=O_\nu D_\nu O_\nu^T$ with $O_\nu=\exp(P_\nu{\cal B}P_\nu)$, but deriving this map from the GTD condensate remains open.  The consistency statement is therefore directional: the transport model makes the charged-lepton contribution CKM-small, the cyclic lift identifies the CP-safe neutral support, and the observed large angles must still be selected by neutral-sector vacuum dynamics.

\subsection{Internal correlations and future tests}

(i) After the stated transport and amplitude postulates are adopted, $\Vcb=0.0422$ has no fitted continuous CKM parameter in the all-theory frame, but its stability is explicitly hostage to the parent's $\sqrt{m_c/m_t}$ (elasticity $-1.9$; Sec.~\ref{sec:kappa}): the sharp statement is a \emph{correlation}, not a number. Any refinement of the programme's mass matching that moves $\sqrt{m_t/m_c}$ toward its measured value must move $\Vcb$ away from $0.042$ unless the virtual-node reading is simultaneously corrected, and this coupled motion is testable within the programme. (ii) The same correlation logic applies to $\Vus\leftrightarrow\sqrt{m_s/m_d}$ (a $5\%$-level effect): scheme refinements must move both together. (iii) Within the transport family, the relative reality of the $(2,3)$ blocks is required by data (Lemma~\ref{lem:nogo}); a future shift of the $\Vcb$--$\Vub$ pair outside the band covered by the lemma's trade-off would falsify the family outright. (iv) The bridge computation owes two continuous numbers, $(\varepsilon,\omega)$, equivalently $(\theta_{13},\delta_{CP})$, and must also select one of the two balanced-orientation branches.  The raw $\omega$ and the branch-A ratio $0.604$ are not invariant targets; the decisive future comparisons are the moduli correlations (i)--(ii) and the microscopic bridge computation itself.

\section{Conclusions}
\label{sec:conclusions}

The operator audit closes one conceptual gap and exposes another. The occupied-chain compression of the multiplicative symmetric-cube operator is a genuine root-mass operator, and its positive square has exactly the proposed mass-ratio spectrum. The exceptional data therefore define a finite Dirac \emph{modulus} without requiring an associative representation of the full Albert algebra. They do not define a unique finite Dirac operator: $Y_f=y_{f3}U_f\mathcal Q_fW_f^\dagger$ still contains the postulated left transport and an uncomputed right frame. This is not a semantic qualification. The natural Hermitian completion is dense, not nine-link, and its selected loop phase moves from $90^\circ$ to $100.03^\circ$ when the fitted long edge is added.

The present Albert calculation nevertheless constructs a nontrivial part of the bridge after an $F_1$--$F_2$ physical-family embedding is adopted.  The inner derivation ${\cal D}_{12}(-g_\chi)$ changes that chosen Peirce slot, reproduces the conjugate up/anti-down amplitudes and promotes $\varphi_{12}=-2\chi$ to an exceptional operator statement.  The split-real refinement adds a distinct result: the four-component complement in $J_2(\mathbb O_s)=J_2(\mathbb H_s)\oplus\mathbb H_s\ell$ enters the Albert cubic as a $\operatorname{Spin}(3,3)$-scalar pairing of opposite Weyl modules.  It is therefore a candidate universal chiral-soldering channel, not a family-mixing field.  Under the explicit factorization assumption $Y_f=B\otimes\Xi_f$, the common $B$ cancels from the family eigenvectors and CKM remains $U_{\Xi_u}^\dagger U_{\Xi_d}$.  This both sharpens the possible origin of the chiral bridge and proves why it does not close the CKM calculation.

For the adopted cyclic Majorana placement and real-linear projection, the Peirce completion has exactly the CP-safe support $(e_4,e_3,e_6)$, with the quark-active $e_1$ quadrature absent.  What remains undetermined is now narrower but still essential: the vacuum selection of the physical Peirce-family embedding, the family kernels $\Xi_{u,d}$, the coloured $23$ and $31$ contractions, the three unequal edge strengths, the right-sector soldering, and the map from the projected antisymmetric generator to the symmetric Majorana mass operator.  Neither the corner scalar $\lambda_0$ nor the universal $B$ channel selects either balanced Layer-2 branch.

The finite-operator formulation also makes the unity with the scaffolding, BF, emergence and mass-ratio constructions explicit. The Preprints.org scaffolding paper uses the $E_8\supset SU(3)_{\rm geom}\times E_6$ branching to separate a proposed split-bioctonionic $(3,3)$ geometric scaffold from the $E_6$ matter fields and relates its four-dimensional fibres to the Albert-algebra mass geometry \cite{Scaffolding}. The present paper sharpens that proposal by identifying the quadratic base with $J_2(\mathbb H_s)$ and the cubic particle invariant with $J_3(\mathbb O_{\mathbb C})$. Their exact corner relation $N_3(\operatorname{diag}(A,\lambda))=\lambda N_2(A)$ shows that the refinement is more than a dimensional analogy, although the required Peirce projector and real-form selection are not yet derived. Before breaking there are two isomorphic copies $J_L\oplus J_R$ of the complex Albert algebra and no preferred charge--mass identification. After the distinct vacuum choices of the parent construction, $J_L$ carries the flavour/charge frame and $J_R$ the square-root-mass frame. The proposed $SO(3,3)$ BF dynamics reduces the $(3,3)$ geometry to two overlapping leaves, with one carrying weak-force geometry \cite{WesleySinghIsidroBF}. Pairing $(\Sigma_L,J_L)$ with the weak/flavour frame and $(\Sigma_R,J_R)$ with the square-root-mass frame therefore gives a coherent geometric interpretation of the finite bridge. It remains a postulate until the $\omega$-dependent real-form map, physical Clifford-module coupling and overlap/gluing kernel are derived.

Within that interpretation, the observed masses are the singular values of the quark blocks of $D_F$, while CKM mixing is the relative orientation of their left spectral projectors, not simply the left--right overlap itself. The same $D_F$ therefore carries both observables, but at different logical levels: the exceptional Jordan spectrum and occupied chains determine its radial data $\mathcal Q_f$; the transport postulates partially specify its angular data $U_f$; and the bifermionic/Higgs bridge must ultimately derive the full chiral kernel, including $W_f$. The colour connection is a commuting octonionic gauge fluctuation on a different tensor factor, and the electroweak vacuum scale enters through the fluctuated finite operator only once. In this form the present paper supplies a sector-dependent replacement for the emergence paper's schematic $Y_f=y_fY_J$ while remaining consistent with its split-biquaternionic kinetic operator and spectral-action traces.

Appendix~\ref{app:vacuum} turns the remaining underdetermination into a conditional vacuum theorem. The displayed spectral traces contain no mixed up--down word and therefore cannot select the CKM orientation. Adding the most general renormalizable orientation coupling for two minimal bifundamentals gives either commuting left mass operators or a flat relative frame; invariance under independent right-family rotations leaves $W_u,W_d$ unobservable to the potential at every polynomial order. Appendix~\ref{app:edgepotential} gives the complementary edge result: the minimal radial-quartic cyclic truncation enforces equal edge magnitudes or leaves their direction flat, whereas the mixed cubic $d(\Phi,H_q,H_q)$ can stably align the bridge with $P_WH_q^\#$ inside the chosen subspace $W$. General $C_3$-invariant quartics need not obey the equal-magnitude result, and stability orthogonal to $W$ is not established. The present sources calculate neither $W$ nor the projected adjoint, so this is a conditional completion with no reduction of the PMNS parameter count. Together the results specify what the full exceptional/GTD derivation must add: a microscopic right-family lock, a charged order parameter whose adjoint selects the edge ratios, and vacuum projectors that make the Peirce basis physical.

This also locates what must lie beyond the standard model if the numerical flavour pattern is to be explained rather than fitted. The standard model can accommodate every observed mass and mixing parameter through freely chosen Yukawa matrices, and an algebraic reformulation can reveal their common operator structure, but neither step supplies a law selecting the observed point in Yukawa-parameter space. Some additional theoretical structure is therefore required for a derivation. It need not introduce new low-energy particles: it could be an algebraic selection rule, a boundary condition, a new symmetry or a microscopic bridge dynamics. The present exceptional-Jordan chains and two-leaf bridge are proposed as that extra structure. Their value will depend on whether the currently postulated projectors and transport rules can themselves be derived and whether the resulting common-scale predictions survive renormalization-group and threshold tests.

Of the three open items of the companion mixing paper, the accounting then reads: the $(2,3)$ normalization follows, with no fitted continuous CKM parameter after the structural choices, from the virtual-node composition under a stated amplitude postulate ($\kappa_{23}^{\rm eff}=0.633$, $\Vcb$ $+0.8\%$ in the all-theory frame, with alternative readings and the $\sqrt{m_c/m_t}$ fragility disclosed); the Cabibbo-block phase is placed at balanced quadrature rather than fitted (at the cost of an owned $+5.3\%$ in $\Vus$); its Peirce-changing generator is constructed for an adopted family embedding; and the long edge is reduced to one fitted complex number on either of two discrete balanced-orientation branches. Thus two of four continuous CKM degrees of freedom are conditional no-fit estimates at $+5.3\%$ and $+0.8\%$, while two are fitted and localized, but the orientation is not dynamically selected. A first-principles bridge computation must now deliver $(\varepsilon,\omega)$ and its orientation, adjudicate the virtual-node amplitude rule, derive the physical Peirce embedding, and fix the chiral projections and right frames. The new lift reveals a plausible connection to Ref.~\cite{AFHM}: a Yukawa block linear in six diagonal mass links and three directed Albert links would form a connected nine-link graph with one cycle, and a sole balanced Cabibbo phase would give a $\pi/2$ holonomy. But this remains a candidate skeleton until the finite Dirac operator is shown to depend linearly on those projected links; the exponential transport and the Hermitian completion are dense. Only after that selection can the local phase be identified with a nine-link Yukawa-loop phase or used in a determinant-level strong-$CP$ claim.

\appendix

\section{Vacuum-selection obstruction for the minimal chiral bridge}
\label{app:vacuum}

The finite-Dirac form \eqref{eq:Ysvd} identifies the data that a microscopic bridge must determine, but it does not supply a vacuum equation for them. This appendix asks a narrower question: if the up- and down-sector chiral blocks are promoted to the minimal dynamical family bifundamentals, can the lowest-order potential select the observed left misalignment, the right frames, and a nine-link basis? The answer is negative under the field content and symmetry stated below. The result is conditional: an exceptional locking tensor or additional family order parameters would evade its assumptions and are precisely the missing structures to be derived.

Let
\begin{equation}
 \Omega_u\longmapsto G_Q\Omega_uG_u^\dagger,
 \qquad
 \Omega_d\longmapsto G_Q\Omega_dG_d^\dagger
 \label{eq:Omegatransform}
\end{equation}
under the kinetic-term family group
$G_F=U(3)_Q\times U(3)_u\times U(3)_d$, and set
\begin{equation}
 X_u=\Omega_u\Omega_u^\dagger,
 \qquad X_d=\Omega_d\Omega_d^\dagger.
 \label{eq:Xudvac}
\end{equation}
The two $X_f$ act on the common left-handed quark-doublet family space. In a singular-value decomposition $\Omega_f=U_fD_fW_f^\dagger$, they are $X_f=U_fD_f^2U_f^\dagger$ and hence know the left frame but not $W_f$.

The heat-kernel traces displayed in the emergence construction, and their sector-resolved form \eqref{eq:abrefine}, are sums of terms such as $\operatorname{Tr}X_f$ and $\operatorname{Tr}X_f^2$. They contain no mixed word in $X_u$ and $X_d$. Consequently they are invariant under an independent change of the relative left orientation,
\begin{equation}
 X_d\longmapsto V X_dV^\dagger,
 \qquad
 \frac{\partial V_{\rm spec}}{\partial V}=0.
 \label{eq:sourceflat}
\end{equation}
Thus the source-level spectral potential leaves $V_{\rm CKM}=U_u^\dagger U_d$ flat. If the schematic source ansatz $Y_f=y_fY_J$ is instead read as one common $Y_J$, the two left mass operators are proportional and give trivial mixing; if it abbreviates distinct $Y_J^u,Y_J^d$, their relative frame is an unspecified input.

\begin{theorem}[Minimal renormalizable alignment]
\label{thm:vacalign}
Assume that $\Omega_u$ and $\Omega_d$ are canonically dimension-one scalars, are the only family-carrying bridge fields, and have nondegenerate nonzero singular values. Up to functions of the separate singular values, the unique renormalizable $G_F$-invariant term sensitive to their relative left orientation is
\begin{equation}
 V_{\rm orient}=\kappa_{ud}\operatorname{Tr}(X_uX_d).
 \label{eq:Vorientation}
\end{equation}
If $\kappa_{ud}\ne0$, every stationary orientation satisfies
\begin{equation}
 [X_u,X_d]=0,
 \label{eq:vaccommute}
\end{equation}
so $V_{\rm CKM}$ is a permutation matrix up to diagonal phases. If $\kappa_{ud}=0$, the relative orientation is a continuous flat direction. Hence this potential has no isolated realistic CKM vacuum.
\end{theorem}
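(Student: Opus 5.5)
The argument has three steps: classify the renormalizable invariants, vary the single orientation-sensitive term, and read off the CKM matrix.

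\textbf{Classification.} A renormalizable potential in dimension-one scalars has degree at most four, and must be a polynomial in $\Omega_f$ and $\Omega_f^\dagger$ with balanced holomorphic degree. Invariance under $U(3)_u\times U(3)_d$ forces every $\Omega_f$ to be contracted with an $\Omega_f^\dagger$ of the same sector on its right index. This leaves only words in $X_u$ and $X_d$ traced over $U(3)_Q$ indices. Cubic terms are excluded: each needs three fundamental indices, and the only candidate $\epsilon\epsilon\,\Omega\Omega\Omega$ would need $\det G$ factors. The $\epsilon$ contractions are invariant only under $SU(3)$, not under the full $U(3)$ factors of $G_F$, so they are absent. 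The quartic invariants are therefore $\operatorname{Tr}X_f$, $(\operatorname{Tr}X_f)^2$, $\operatorname{Tr}X_f^2$, $\operatorname{Tr}X_u\operatorname{Tr}X_d$ and $\operatorname{Tr}(X_uX_d)$. Only the last depends on the relative left frame, since all the others are functions of the separate singular values $D_f^2$. This also shows that $W_f$ never appears: $X_f$ is blind to $W_f$, which is the claim about right frames.

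\textbf{Stationarity.} Fix the spectra and write $X_d\mapsto VX_dV^\dagger$ with $V=e^{\epsilon K}$ and $K$ antihermitian. The first variation of $\kappa_{ud}\operatorname{Tr}(X_uX_d)$ is $\kappa_{ud}\operatorname{Tr}(K[X_d,X_u])$. This vanishes for all antihermitian $K$ exactly when $[X_u,X_d]=0$, because $[X_d,X_u]$ is itself antihermitian and we may choose $K=[X_d,X_u]$, which gives minus a squared norm. Two questions remain. The first is whether stationarity with respect to the singular values could spoil this conclusion. It cannot: orientation variations are independent directions, so at any stationary point of the full potential the orientation derivative must vanish on its own. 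The second is the \emph{main obstacle}: justifying that the full variety of orientations is reached. The orientations form the homogeneous space $U(3)$, and $G_Q$ acts on it together with the relative rotation. The relative rotation $V$ is genuinely free, since we may vary $U_d$ alone while holding $\Omega_u$ fixed.

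\textbf{Consequence for CKM.} Commuting Hermitian matrices with nondegenerate spectra share an eigenbasis that is unique up to phases. Hence $U_u$ and $U_d$ agree column-by-column up to a permutation and phases, and $V_{\rm CKM}=U_u^\dagger U_d$ is a permutation times a diagonal phase matrix. If $\kappa_{ud}=0$, the potential is independent of $V$, so the relative orientation is an exactly flat $U(3)$ direction. In neither case is there an isolated vacuum with realistic nonpermutation mixing.
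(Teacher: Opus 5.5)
Your proposal is correct and follows essentially the same route as the paper. You reduce the renormalizable $G_F$-invariants to trace words in $X_u,X_d$, so that $\operatorname{Tr}(X_uX_d)$ is the only orientation-sensitive term; varying it under $X_d\mapsto e^{tK}X_de^{-tK}$ gives $\operatorname{Tr}(K[X_d,X_u])=0$, hence $[X_u,X_d]=0$, and the common nondegenerate eigenbasis then makes $V_{\rm CKM}$ a permutation up to diagonal phases. Where you go further is the explicit list of quartic invariants and the exclusion of $\det\Omega_f$ under full $U(3)$; the paper instead notes that such determinant terms, admissible only after reducing to $SU(3)$, fix only sectoral phases and singular values, and adds that $\operatorname{Tr}(X_uX_d)=\sum_{ij}a_ib_j|V_{ij}|^2$ is linear on doubly stochastic matrices.
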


\begin{proof}
Besides products of the separate traces, the only quartic contraction joining the two sectors without identifying the distinct $u_R$ and $d_R$ indices is $\operatorname{Tr}(X_uX_d)$. Terms $\det\Omega_f+\mathrm{h.c.}$ may be admitted after reducing the corresponding $U(3)$ factors to $SU(3)$, but they fix only sectoral determinant phases and singular values. For an infinitesimal relative rotation $X_d\mapsto e^{tK}X_de^{-tK}$, $K^\dagger=-K$,
\begin{equation}
 \left.\frac{dV_{\rm orient}}{dt}\right|_{t=0}
 =\kappa_{ud}\operatorname{Tr}\!\left([X_d,X_u]K\right).
 \label{eq:vacvariation}
\end{equation}
Stationarity for every anti-Hermitian $K$ gives Eq.~\eqref{eq:vaccommute}. Two commuting Hermitian matrices with nondegenerate spectra have a common eigenbasis, up to phases and permutations. If the electroweak completion forbids the mixed contraction in Eq.~\eqref{eq:Vorientation}, it realizes the $\kappa_{ud}=0$ flat case rather than evading the dichotomy. Equivalently,
$\operatorname{Tr}(X_uX_d)=\sum_{ij}a_i b_j|V_{ij}|^2$ is linear in the doubly stochastic matrix $(|V_{ij}|^2)$, and its global extrema occur at permutations.
\end{proof}

There is a separate all-orders obstruction. Under Eq.~\eqref{eq:Omegatransform}, $W_u$ and $W_d$ move independently along the $U(3)_u$ and $U(3)_d$ symmetry orbits. Every invariant polynomial built solely from $\Omega_u,\Omega_d$ and their adjoints must contract a right index back into the same right-sector space. It can therefore depend on trace words in $X_u,X_d$, their singular values and, for $SU(3)$ rather than $U(3)$ factors, sectoral determinant phases, but not on the right eigenvectors. Hence
\begin{equation}
 \frac{\partial V_{\rm inv}}{\partial W_u}
 =\frac{\partial V_{\rm inv}}{\partial W_d}=0
 \qquad\text{at every polynomial order.}
 \label{eq:rightflat}
\end{equation}
This statement would fail if the microscopic theory supplied a tensor that contracts $u_R$ and $d_R$ family indices or locked either right space to an independently selected Jordan frame. Such a tensor is additional bridge structure, not an invariant made from the two minimal bifundamentals alone.

The physical consequence is sharper than the observation that a numerical fit is incomplete. Matrix support changes under
$\Omega_u\mapsto G_Q\Omega_uG_u^\dagger$ and
$\Omega_d\mapsto G_Q\Omega_dG_d^\dagger$; therefore exact texture zeros, a nine-link support graph and its loop phase cannot be ultraviolet statements until the vacuum selects the relevant left and right rank-one projectors. Higher trace words in $X_u,X_d$ may produce noncommuting left frames, but they still cannot determine $W_u,W_d$ under the symmetry above. A derivation of nine-link sparsity consequently requires at least
\begin{equation}
 \{P_i^Q\},\qquad \{P_j^u\},\qquad \{P_k^d\},
 \qquad
 P_i^Q\Omega_fP_j^f=0\ \text{or}\ \ne0
 \label{eq:vacprojectors}
\end{equation}
as vacuum-selected data, together with a right-sector locking or selection rule. Quantization of the surviving loop phase at $k\pi/8$ would further require a derived discrete remnant or holonomy condition; it does not follow from the alignment theorem. Thus the minimal field content explains why the finite-Dirac mass modulus can be derived while the preferred chiral bridge remains open.

\section{Peirce-edge potential: a truncation test and a conditional completion}
\label{app:edgepotential}

The exact generators of Sec.~\ref{sec:albertbridge} reduce the remaining neutral-sector question to three real edge amplitudes.  Write
\begin{equation}
 \boldsymbol\rho=(\rho_{31},\rho_{12},\rho_{23}),
 \qquad r^2=\rho_{31}^2+\rho_{12}^2+\rho_{23}^2.
 \label{eq:rhovector}
\end{equation}
The minimal radial-quartic cyclic truncation considered for an isolated bridge is
\begin{equation}
 V_0(\boldsymbol\rho)
 =-\frac{m^2}{2}r^2+\frac{\lambda_\Phi}{4}r^4
 +\kappa_\Phi\rho_{31}\rho_{12}\rho_{23}.
 \label{eq:Vedgeisolated}
\end{equation}
On the coassociative slice used in the mass-ratio construction the reduced cubic can vanish; retaining it gives the stronger test.  At a full-rank stationary point set $a=-m^2+\lambda_\Phi r^2$.  The three equations are $a\rho_i+\kappa_\Phi\rho_j\rho_k=0$.  Multiplying two equations by their corresponding $\rho_i$ and subtracting gives $a(\rho_i^2-\rho_j^2)=0$.  If every component and $\kappa_\Phi$ are nonzero, $a\ne0$, and therefore
\begin{equation}
 \boxed{|\rho_{31}|=|\rho_{12}|=|\rho_{23}|.}
 \label{eq:edgeequal}
\end{equation}
If $\kappa_\Phi=0$, the radial minimum leaves the direction of $\boldsymbol\rho$ flat.  Thus the specific truncation \eqref{eq:Vedgeisolated} gives equal edge magnitudes or no directional selection; it cannot account for the observed hierarchy of lepton rotations.

This conclusion is deliberately narrower than a no-go theorem for every cyclic quadratic--cubic--quartic potential.  Mere $C_3$ symmetry permits additional invariants, including $\sum_i\rho_i^3$ and oriented quartics such as
\begin{equation}
 \rho_{31}^{3}\rho_{12}+\rho_{12}^{3}\rho_{23}
 +\rho_{23}^{3}\rho_{31},
 \label{eq:additionalcyclic}
\end{equation}
together with the oppositely oriented companion, unless further symmetries or the microscopic field representation forbid them.  Such terms alter the stationary equations and may support unequal full-rank extrema.  Equation~\eqref{eq:edgeequal} is therefore a result for the minimal radial-quartic truncation, not for general $C_3$-invariant vacuum dynamics.

A charged-sector vacuum can break this cyclic symmetry.  Let $\Phi,H_q\in\mathbf{27}$ and let $d$ denote the polarized Albert cubic.  Its derivative defines the Jordan adjoint through
\begin{equation}
 d(\Phi,H_q,H_q)=\langle\bar\Phi,H_q^\#\rangle
 \label{eq:Jordanadjointsource}
\end{equation}
up to normalization.  Provided the remaining gauge and discrete charges allow it, the minimal exceptional mixed term $-\eta\operatorname{Re}d(\Phi,H_q,H_q)$ is a linear source for the bridge after $H_q$ condenses.  On the CP-safe three-edge subspace $W$, define
\begin{equation}
 \mathbf J=\eta P_WH_q^\#,
 \qquad
 V_{\rm cap}(\boldsymbol\rho)
 =-\frac{m^2}{2}r^2+\frac{\lambda_\Phi}{4}r^4
 -\mathbf J\cdot\boldsymbol\rho,
 \qquad \lambda_\Phi>0.
 \label{eq:Vedgecoupled}
\end{equation}
Within $W$, every global minimum has $\boldsymbol\rho=t\widehat{\mathbf J}$, where $t>0$ obeys
\begin{equation}
 \lambda_\Phi t^3-m^2t-\|\mathbf J\|=0.
 \label{eq:edgeradial}
\end{equation}
The two transverse Hessian eigenvalues in $W$ are $\|\mathbf J\|/t>0$ and the radial eigenvalue is $2\lambda_\Phi t^2+\|\mathbf J\|/t>0$.  The source-selected solution is therefore an isolated stable minimum within $W$ and
\begin{equation}
 \boxed{\boldsymbol\rho\parallel P_WH_q^\#.}
 \label{eq:edgealignment}
\end{equation}

This does not establish stability in the full $\mathbf{27}$.  The directions orthogonal to $W$ require independently positive masses, constraints or a microscopic selection rule, and the projector $P_W$ is itself part of the still-uncomputed vacuum data.

This is an existence result, not yet a prediction.  The previous common-vacuum PMNS ansatz would require, in the ordering of Eq.~\eqref{eq:rhovector},
\begin{equation}
 \boldsymbol\rho_\star
 =\frac{1}{\cos\chi}
 \left(-\frac{\pi}{4}+\theta_C^J,\,
       \frac{\pi}{4},\,
       \frac{\theta_C^J}{2}-2\varepsilon_q\right),
 \label{eq:rhotargetpaper}
\end{equation}
where $\theta_C^J$ is the Layer-1 Cabibbo angle and $\varepsilon_q$ the fitted quark long-edge angle.  The missing microscopic theorem is precisely
\begin{equation}
 P_WH_q^\#\parallel\boldsymbol\rho_\star.
 \label{eq:missingadjoint}
\end{equation}
No charged order parameter specified in the present sources has yet been shown to obey Eq.~\eqref{eq:missingadjoint}.  Without it, the three components of $\mathbf J$ provide two directional ratios and one magnitude.  In the PMNS parametrization of Ref.~\cite{SinghNu}, the numerical Jacobian from $(b_4,b_3,b_6)$ in Eq.~\eqref{eq:MajoranaSupport} to $(\theta_{12},\theta_{23},\theta_{13})$ has determinant $0.935$ at the proposed point, so the three source components are locally equivalent to fitting the three angles.  The coupled potential shows how unequal edges can arise, but does not reduce the parameter count until $H_q^\#$ is independently calculated.  The corner scalar $\lambda_0$ can alter the radial normalization in Eq.~\eqref{eq:edgeradial}; it cannot determine the two edge ratios.

\section{Proofs and closed forms}
\label{app:proofs}

\emph{Postulate~\ref{post:comp}: the projected block and the alternative readings.} Work in the three-dimensional space (gen-2, virtual, gen-3). The composed transport is $T=R_{v3}(\theta_3)R_{2v}(\theta_2)$; its projection onto the physical states is the triangular block
$\big(\begin{smallmatrix}\cos\theta_2 & 0\\ \sin\theta_2\sin\theta_3 & \cos\theta_3\end{smallmatrix}\big)$,
with $T_{32}=\sin\theta_2\sin\theta_3$ the two-step amplitude and $T_{23}=0$. The block is not unitary (the discarded virtual admixture is $\sin^2\theta_2\cos^2\theta_3=26\%$ of the gen-2 norm), and different re-unitarizations assign different angles: polar decomposition $0.0675$, row normalization $0.1257$, column normalization $0.1435$, amplitude reading $0.1232$; the corresponding $\Vcb$ values are $0.014$, $0.045$, $0.063$, $0.042$. The amplitude reading is adopted as Postulate~\ref{post:comp} for the reasons stated there. With $\tan\theta_2=1/(1+\delta)$ and $\tan\theta_3=(1-\delta)/(1+\delta)$: $\sin\theta_2=0.527065$, $\sin\theta_3=0.233746$, $\sin\theta^d_{23}=0.123200$.

\emph{Lemma~\ref{lem:fritzsch}.} With $U_f=R_{23}R_{12}$ and phases only on the $(1,2)$ blocks, the exact matrix elements are $V_{ub}=-s^{u}_{12}e^{-i\varphi}D$ and $V_{cb}=c^{u}_{12}D$ with the common factor $D=s^{d}_{23}c^{u}_{23}-c^{d}_{23}s^{u}_{23}$, whence $|V_{ub}|/|V_{cb}|=\tan\theta^u_{12}=\sqrt{m_u/m_c}$ exactly (verified to machine precision); numerically $0.042449\times0.042202=0.001791$.

\emph{Wolfenstein and unitarity-triangle extraction.} Standard definitions \cite{Wolfenstein,Jarlskog} are used: $\lambda=s_{12}$, $A=s_{23}/\lambda^2$ (tabulated via the proxy $|V_{cb}|/|V_{us}|^2$, equal to it up to $\mathcal O(s_{13}^2)$ corrections), $\bar\rho+i\bar\eta=-(V_{ud}V^*_{ub})/(V_{cd}V^*_{cb})$, $\beta=\arg[-(V_{cd}V^*_{cb})/(V_{td}V^*_{tb})]$, $\gamma=\arg[-(V_{ud}V^*_{ub})/(V_{cd}V^*_{cb})]$, and $J=\mathrm{Im}(V_{us}V_{cb}V^*_{ub}V^*_{cs})$.

\section{Numerical documentation}
\label{app:numerics}

All CKM-transport numbers are produced by the ancillary script \texttt{ckm\_analysis.py} (included). In addition to the CKM calculations described below, the script enumerates the ten normalized occupation labels of $\Sym^3(\mathbf3)$, constructs the occupied-node operators $\mathcal R_{u,d}$ and $\mathcal Q_{u,d}$, verifies the six eigenvalues of each sectoral finite Dirac block, and reconstructs the Hermitian representative used for the eighteen-entry and loop-phase statements of Sec.~\ref{sec:finite}. It evaluates the rung tangents \eqref{eq:rungs} from $\delta=\sqrt{3/8}$, the composition \eqref{eq:comp} together with the alternative re-unitarization readings and the input-frame table of Sec.~\ref{sec:kappa}, the Layer-1 assembly and Table~\ref{tab:layer1}, the no-go scan of Lemma~\ref{lem:nogo}, the failed-texture scan of Appendix~\ref{app:failed} (code included), and both Layer-2 orientation branches. For each orientation, the two fitted parameters are obtained by solving exactly on the two fitted observables: $\varepsilon$ by bisection on $\Vub$ at each $\omega$, and $\omega$ by bisection on $\delta_{CP}$ over the relevant monotone interval.  Orientation A gives $(\varepsilon,\omega)=(0.00207936,288.1915^\circ)$ and orientation B gives $(0.00535850,202.1844^\circ)$, with both targets reproduced to machine precision and positive $J$. The $CP$ phase is extracted quadrant-safely by construction, $\delta_{CP}=\mathrm{atan2}(\sin\delta,\cos\delta)$ with $\sin\delta$ from the Jarlskog invariant and $\cos\delta$ from $|V_{td}|^2$ via $\cos\delta=(s_{12}^2s_{23}^2+c_{12}^2c_{23}^2s_{13}^2-|V_{td}|^2)/(2s_{12}s_{23}c_{12}c_{23}s_{13})$. Experimental targets are the PDG 2026 CKM-review global-fit Wolfenstein parameters $\lambda=0.22517$, $A=0.826$, $\bar\rho=0.1576$, $\bar\eta=0.3556$ \cite{PDG}, with \emph{all} remaining targets derived from these four through the exact standard parametrization, so that the target set is internally coherent: $\Vus=0.22517$, $\Vcb=0.041879$, $\Vub=0.0037623$, $\Vtd=0.008625$, $\Vts=0.041154$, $J=3.158\times10^{-5}$, $\delta_{CP}=66.13^\circ$, $\beta=22.89^\circ$, $\gamma=66.10^\circ$. The quoted deviations use these central values; no theory uncertainty is assigned to the model numbers, global-fit uncertainties are at or below the percent level for the well-measured entries \cite{PDG,CKMfitter,UTfit}, and the percentage deviations are therefore theory-dominated.  The Layer-1 mass-ratio comparison uses the common parent baseline $\mu_*=M_Z$; it does not mix mass inputs quoted at different scales. The separate \texttt{nine\_link\_jordan\_scan.py} performs the exact support enumeration and reduced-protocol test of Sec.~\ref{sec:afhm}, using the $M_Z$ control values of Ref.~\cite{AFHM} and the eigenvalue ratios of $\mathcal Q_{u,d}$. The archived raw scans and path-independent summarizer reconstruct the quoted central-value statistics; root discovery remains dependent on the documented multistart protocol and software environment.

\section{The failed Hamiltonian-texture extension}
\label{app:failed}

For completeness: reading (i) of Sec.~\ref{sec:transport} was implemented as the family of symmetric tridiagonal matrices on the down chain with zero diagonal except the final node, hops solved in closed form from the characteristic polynomial (matching the three physical eigenvalues and one auxiliary eigenvalue assigned to the virtual node), all $2^4$ sign patterns and auxiliary magnitudes $\{ac^2\ \text{monomial},\ \tfrac12 ac^2,\ 2ac^2,\ (1+\delta)^2,\ m_b\}$ scanned. The corrected characteristic-polynomial inversion (an earlier version of this scan carried a sign error in the $E_3/E_1$ terms, corrected here) yields ten admissible sign assignments; across them $\Vus$ ranges over $0.05$--$0.66$ and the non-unitarity (virtual leakage) over $20$--$69\%$. The assignment closest to the Cabibbo value, $\Vus=0.27$, carries $31\%$ non-unitarity; none falls below $20\%$. The exclusion is therefore decisive for every tested case, but we state its scope plainly: it covers the enumerated sign patterns and the five auxiliary-eigenvalue choices, not a continuous scan of the texture family (the included script reproduces the scan). The same conclusion holds with the phases of Postulate~\ref{post:phase} in place. The transport reading, by contrast, preserves the two-state relations exactly and is unitary by construction. This comparison is recorded so that the grounds for the modelling choice are explicit; as stated in Sec.~\ref{sec:transport}, the exclusion covers this texture family, not reading (i) in full generality.

\section*{Acknowledgements}

\noindent{\bf Use of generative AI}: This document has been prepared with research assistance from Anthropic's Claude Fable 5. Independent technical and adversarial reviews by OpenAI's GPT-5.6 Sol identified the finite-Dirac operator audit, the gauge-completed operator dictionary and quadratic--cubic Jordan clarification of Sec.~\ref{sec:finite}, the exact Peirce-changing Albert lift of Sec.~\ref{sec:albertbridge}, the split-real Peirce decomposition and conditional chiral-factorization result of Sec.~\ref{sec:universalsoldering}, the vacuum-selection analyses of Appendices~\ref{app:vacuum} and \ref{app:edgepotential}, the sign error of Appendix~\ref{app:failed}, and the coherent-target refit of Appendix~\ref{app:numerics}. A further adversarial review by Claude Fable 5 identified the discrete Layer-2 branch ambiguity and related scope and convention corrections incorporated in this version. The author takes full intellectual responsibility for the contents of this article.

\end{document}